\documentclass[journal]{IEEEtran}

\usepackage{amsmath,amssymb,amsfonts}
\usepackage{graphicx}
\usepackage{cite}
\usepackage{booktabs}
\usepackage{array}
\usepackage{tabularx}
\usepackage[caption=false,font=footnotesize]{subfig}
\usepackage{stfloats}
\usepackage{url}
\usepackage{float}
\usepackage{bm}
\usepackage{multirow}
\usepackage[hidelinks]{hyperref}

\newtheorem{theorem}{Theorem}
\newtheorem{lemma}{Lemma}
\newtheorem{corollary}{Corollary}

\title{Temperature-Dependent Multiphysics Modeling of Additive Friction Stir Deposition Using Multi-Task Coupled Physics-Informed Neural Networks}

\author{Dhrubajyoti~Gupta,
Nikhil~Gotawala,
Raghav~Gnanasambandam,
Rohit~Kannan,
Jian~Yu,
Hang~Z.~Yu,
Zhenyu~James~Kong%
\thanks{D. Gupta, R. Kannan, and Z. J. Kong are with the Grado Department of Industrial and Systems Engineering, Virginia Tech, Blacksburg, VA 24061 USA.}%
\thanks{N. Gotawala and H. Z. Yu are with the Department of Materials Science and Engineering, Virginia Tech, Blacksburg, VA 24061 USA.}%
\thanks{R. Gnanasambandam is with the Department of Industrial and Manufacturing Engineering, Florida A\&M University--Florida State University, Tallahassee, FL 32310 USA.}%
\thanks{J. Yu is with the DEVCOM, Army Research Laboratory, Adelphi, MD 20783 USA.}%
\thanks{Corresponding author: Zhenyu James Kong (e-mail: zkong@vt.edu).}}

\begin{document}
\maketitle

\begin{abstract}
Additive friction stir deposition (AFSD) involves coupled thermal and material-flow fields governed by frictional heating, plastic deformation, and tool-imposed boundary conditions. High-fidelity finite-volume methods (FVMs) can resolve these fields but are computationally expensive, while constant-property approximations can distort the predicted thermo-mechanical response. This work develops a steady-state multi-task coupled physics-informed neural network (MCoPINN) to predict three-dimensional temperature and material-flow fields while reconstructing temperature-dependent thermal conductivity, density, and specific heat from sparse property measurements. A theoretical analysis decomposes prediction error into property reconstruction and neural field-solver contributions and characterizes the discrepancy associated with constant-property models. A controlled one-dimensional nonlinear-conduction study evaluates property reconstruction under sparse measurements. The framework is then evaluated for AFSD against a temperature-dependent FVM benchmark and experimental thermocouple measurements. MCoPINN reproduces the reference thermal and material-flow fields and reduces average temperature root-mean-square error from 50.13~K for the constant-property CoPINN to 30.28~K across four traverse speeds. MCoPINN requires approximately 8.5~hours of training per condition, compared with 52~hours for the FVM benchmark. These results demonstrate computationally efficient, physics-coupled prediction with temperature-dependent material properties.
\end{abstract}

\begin{IEEEkeywords}
Additive friction stir deposition, physics-informed neural networks, multiphysics modeling, temperature-dependent material properties.
\end{IEEEkeywords}

\noindent\textit{Note to Practitioners---}
AFSD process models are useful for selecting process conditions and interpreting thermal and material-flow behavior, but high-fidelity simulations can require many hours for each operating condition. The proposed framework provides a lower-cost alternative that retains the governing physics and can also reconstruct temperature-dependent material properties when only sparse property measurements are available. For materials with well-established property laws, those laws should be prescribed directly. The proposed property-learning component is intended for partially characterized materials, where the continuous temperature dependence is uncertain or incompletely measured. The present study considers steady-state, single-track AFSD. Extension to transient and multilayer deposition is required before deployment for full build-scale process planning.

\section{Introduction}
\label{sec:introduction}
\IEEEPARstart{A}{dditive} friction stir deposition (AFSD) is a solid-state metal additive manufacturing process in which feedstock is deposited through severe plastic deformation and frictional heating without melting. Relative to fusion-based routes, AFSD can reduce solidification-related defects and residual stress while promoting favorable microstructural evolution and mechanical properties \cite{Yu01022021,SHAO2023112356,AGRAWAL2023112470,KUMAR2026101243}. These advantages have motivated the use of AFSD for cladding, repair, and restoration of structural alloys \cite{stubblefield2023ballistic,palya2026investigation,HARTLEY2021117045}. Process modeling of AFSD remains challenging because accurate prediction of the thermo-mechanical response requires simultaneous treatment of temperature, deformation, constitutive behavior, heat generation, and material flow near the tool--material interface. Consequently, numerous studies have been conducted in this direction using high-fidelity numerical methods, data-driven approaches, and physics-informed modeling techniques, as reviewed below.

\subsection{Literature Review}
\label{subsec:literature_review}

\subsubsection{High-Fidelity Numerical Simulations}
\label{subsubsec:high_fidelity}

Finite-element methods (FEMs), finite-volume methods (FVMs), and mesh-free formulations such as smoothed particle hydrodynamics (SPH) have been used extensively to predict the coupled thermal and mechanical response of AFSD~\cite{SHARMA2026100220}. These methods can resolve temperature, deformation, stress, and material-flow fields, but at a substantial computational cost. Alam et al.~\cite{ALAM2025100263} developed a coupled Eulerian--Lagrangian FEM model for AA6061-T6 that resolves large deformation, contact, material flow, and heat transfer. In contrast, Sharma et al.~\cite{SHARMA2022111412} proposed a computationally efficient pseudo-thermo-mechanical FEM framework for multilayer AFSD. However, material flow is not solved directly, and the strain rate is estimated analytically to determine heat generation.

FVM and computational fluid dynamics formulations have been used to resolve strongly deforming material flow more directly. Kincaid et al.~\cite{10.1115/1.4056642} developed a viscoplastic OpenFOAM model for single-layer AA6061 deposition that reproduced the measured temperature and deposit morphology. Gotawala and Yu~\cite{GOTAWALA2023114} developed an experiment-informed FVM framework coupling temperature, effective strain rate, constitutive response, and material flow. The corresponding FVM benchmark requires approximately 52~hours per operating condition because the coupled equations must be solved iteratively to convergence. Fernando et al.~\cite{fernando2025toolgeometry} and Habib et al.~\cite{habib2026pingeometry} further investigated the influence of tool-face and pin geometries on material flow, thermo-mechanical behavior, and interlayer material mixing.

Mesh-free SPH avoids severe mesh distortion and can track material histories. Stubblefield et al.~\cite{STUBBLEFIELD2021109514} used SPH to resolve material flow, temperature, and strain-rate gradients, and later combined particle tracking with X-ray computed tomography to validate material transport around the tool~\cite{stubblefield2023materialflow}. Reported single-track SPH simulations required more than 35~hours even with GPU acceleration. Stubblefield et al.~\cite{stubblefield2022mechanics} also demonstrated that temperature- and strain-rate-dependent constitutive behavior substantially affects the predicted temperature, stress, and deformation fields, highlighting the importance of accurate material-property representations.

Reduced thermal and analytical formulations can lower computational cost by simplifying heat generation and material flow. Hu et al.~\cite{chen2024thermalAFSD} developed separate models for friction preheating and steady deposition and reproduced measured thermal histories. Ruvo et al.~\cite{ruvo2026surfaceheat} proposed a surface-based heat-source model for multilayer AFSD coupled to an elastoplastic mechanical model. Related studies used calibrated heat-source representations to reconstruct thermal histories and microstructural evolution~\cite{lyu2025genetic}, while Kocak et al.~\cite{kocak2026asymptotes} examined thermo-mechanical evolution across successive deposition passes. An analytical model by de Vaucorbeil et al.~\cite{devaucorbeil2026analytical} related process conditions to processing temperature, deposition rate, and energy efficiency. These formulations provide efficient alternatives for thermal prediction and process analysis but do not resolve the complete coupled material-flow response.

\subsubsection{Data-Driven and Physics-Informed Learning-Based Approaches}
\label{subsubsec:data_physics_informed}

Data-driven methods provide another route to lower-cost prediction~\cite{MABUWA2026123154}. Zhu et al.~\cite{ZHU20241} combined a physics-based thermal model, in-situ temperature measurements, and a Gaussian-process surrogate to estimate uncertain parameters and accelerate prediction. Other studies have used machine learning to relate process conditions to material properties. Qiao et al.~\cite{qiao2024comparative} compared support vector machines, random forests, and neural networks for predicting microhardness and tensile strength. Similarly, Zhu et al.~\cite{zhu2024mlclassification} and Tam et al.~\cite{tam2026mechanicalML} investigated machine-learning-based prediction of tensile properties from process parameters and experimental data. Although computationally efficient, these approaches generally predict selected quantities rather than complete thermal and material-flow fields. Their accuracy also depends on sufficient training data, which can be costly to obtain.

Machine learning has also been applied to AFSD process monitoring and physics-assisted prediction. Zhang et al.~\cite{zhang2026forceflow} used force-signal features for online defect detection and classification. Shi et al.~\cite{shi2024afsdphysics} developed AFSD-Physics to identify relations governing tool and build temperatures using physical information and measurements. They subsequently introduced AFSD-Nets~\cite{shi2024afsdnets} for predicting coupled temperature evolution during multilayer deposition. Wu et al.~\cite{wu2026sequentialbayesian} combined three-dimensional computational fluid dynamics with thermal imaging in a sequential Bayesian framework for full-field temperature and material-flow prediction. However, these approaches often require experimental measurements or precomputed simulations that can be expensive to generate.

Physics-informed neural networks (PINNs) embed governing equations and boundary conditions directly into training~\cite{RAISSI2019686,712178,cuomo2022scientific}. Gnanasambandam et al.~\cite{11305145} introduced a coupled PINN (CoPINN) for steady-state AFSD that predicts full-field velocity and temperature at substantially lower computational cost than high-fidelity simulations. However, the formulation assumes constant thermophysical properties. Outside AFSD, recent physics-informed studies have identified temperature-dependent material properties from sparse field measurements~\cite{ZHENG2025104408,Moon2025}. These studies infer continuous property functions indirectly from measured field responses, requiring dedicated experiments that can be costly to perform.

\subsection{Research Gap and Contributions}
\label{subsec:research_gap}
The literature review reveals two related limitations. First, several AFSD models, including the CoPINN developed by Gnanasambandam et al.~\cite{11305145}, assume constant thermal conductivity, density, and specific heat, although these properties can vary appreciably with temperature. Accounting for this temperature dependence in high-fidelity simulations can further increase the computational burden. In addition, continuous property relations are not always fully characterized. Material properties are often reported only at a limited number of temperatures. When the functional form is known and simple, such as a low-order polynomial, a few measurements may be sufficient to determine the relation. However, when the underlying relation is unknown or strongly nonlinear, sparse measurements may not adequately capture its behavior. Conventional interpolation may also fail to reconstruct the relation accurately. These challenges motivate a computationally efficient, physics-coupled approach for reconstructing temperature-dependent material properties from limited data. This sparse-data setting is further examined in a controlled one-dimensional (1D) study.

Second, existing PINN-based formulations for AFSD have primarily focused on reducing the computational cost of field prediction. They do not explicitly distinguish errors arising from the material-property representation from those introduced by the neural field solver. This distinction is important because a small governing-equation residual does not necessarily imply an accurate physical prediction when the underlying material-property representation is inaccurate. For example, the previously developed CoPINN~\cite{11305145} predicts the coupled AFSD fields efficiently but assumes constant thermophysical properties throughout the domain.

The present study addresses these limitations through two primary contributions. First, a multi-task coupled physics-informed neural network (MCoPINN) augments the thermal and material-flow subnetworks with a property subnetwork that reconstructs temperature-dependent thermophysical properties from discrete measurements while remaining coupled to the governing physics. Second, a formal theoretical analysis separates the total prediction error into two sources: error from reconstructing the thermophysical properties and error from the PINN field solution. The analysis also examines the additional discrepancy introduced when temperature-dependent properties are assumed to be constant.

The remainder of the paper is organized as follows. Section~\ref{sec:methodology} presents the model formulation, and Section~\ref{sec:theory} develops the theoretical analysis. Section~\ref{sec:results_discussion} presents the results and performance evaluation, followed by conclusions in Section~\ref{sec:conclusion}.

\section{Methodology}
\label{sec:methodology}

\subsection{Experimental Setup}
\label{subsec:afsd_configuration}
In AFSD, metallic feedstock, such as AA6061 and AA7075 aluminum alloys, is supplied through a rotating tool and deposited onto a substrate by frictional heating and severe plastic deformation. Its potential for large-scale fabrication and refined microstructures has attracted growing interest. Figure~\ref{fig:afsd_process} illustrates the experimental configuration. The tool rotates with angular velocity $\omega$, traverses with velocity $V$, and receives a normal feed rate $F$. A thermocouple positioned within the print head provides temperature measurements near the tool--material interface. These measurements are used only for independent validation of the numerical predictions.

\begin{figure}[ht]
\centering
\includegraphics[width=0.6\columnwidth]{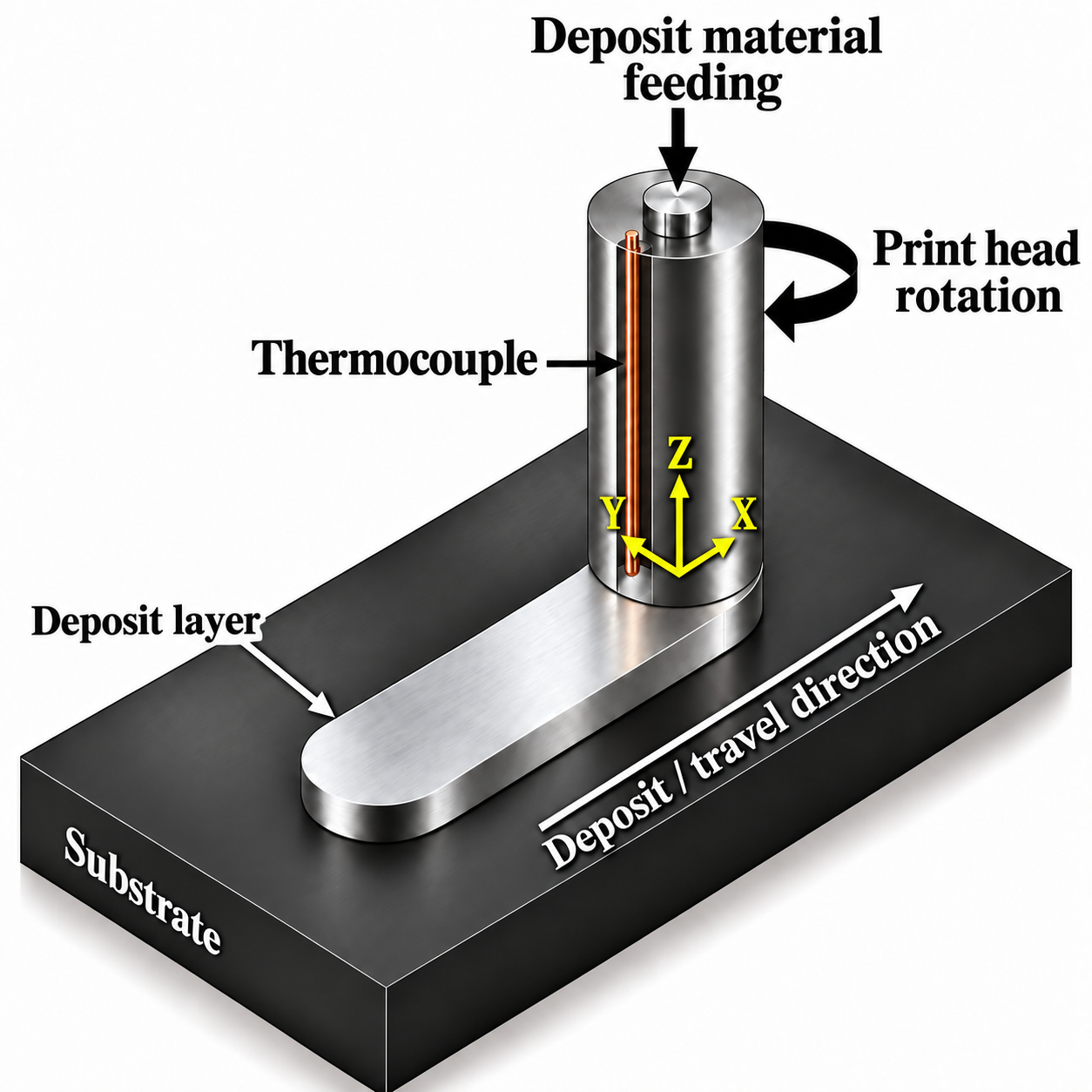}
\caption{Schematic of the AFSD process.}
\label{fig:afsd_process}
\end{figure}

\subsection{Governing Equations}
\label{subsec:governing_model}
The governing equations follow the steady-state AFSD formulation developed by Gotawala and Yu~\cite{GOTAWALA2023114} and are summarized below. Here, steady-state refers to temperature and material-flow fields that remain time-independent in a reference frame moving with the tool.
\subsubsection{Conservation Laws}
\label{subsubsec:conservation_laws}
The deposited material is modeled as a steady viscoplastic continuum in a three-dimensional domain $\Omega\subset\mathbb R^3$. The state variables are material-flow velocity $\mathbf u=(u_1,u_2,u_3)$, pressure $p$, and temperature $T$, where $u_1$, $u_2$, and $u_3$ represent the velocity components along the three spatial directions, respectively. The governing partial differential equations (PDEs) are given by
\begin{subequations}\label{eq:governing}
\begin{gather}
\frac{\partial[\rho u_i]}{\partial x_i}=0,
\label{eq:mass_conservation}
\\
\frac{\partial[\rho u_i u_j]}{\partial x_i}
=
-\frac{\partial p}{\partial x_j}
+\frac{\partial}{\partial x_i}
\left[
\mu_{\mathrm{vis}}
\left(
\frac{\partial u_j}{\partial x_i}
+\frac{\partial u_i}{\partial x_j}
\right)
\right],
\label{eq:momentum_conservation}
\\
\rho c_p u_i\frac{\partial T}{\partial x_i}
=
\frac{\partial}{\partial x_i}
\left(
k\frac{\partial T}{\partial x_i}
\right)
+q_g.
\label{eq:energy_conservation}
\end{gather}
\end{subequations}
Equations~\eqref{eq:mass_conservation}--\eqref{eq:energy_conservation} enforce conservation of mass, momentum, and energy, respectively. Here, $x_i$ denotes the spatial coordinates, with $i,j\in\{1,2,3\}$. Density $\rho(T)$, thermal conductivity $k(T)$, and specific heat capacity $c_p(T)$ are temperature dependent. The quantities $\mu_{\mathrm{vis}}$ and $q_g$ denote the effective viscosity and volumetric heat generation due to plastic deformation, respectively.

\subsubsection{Constitutive Relations}
\label{subsubsec:constitutive_relations}
The governing PDEs require constitutive relations to relate deformation to stress, viscosity, and heat generation. The constitutive response follows the Sellars--Tegart formulation and is summarized as follows:
\begin{subequations}\label{eq:constitutive}
\begin{gather}
D_{ij}
=
\frac{1}{2}
\left(
\frac{\partial u_i}{\partial x_j}
+
\frac{\partial u_j}{\partial x_i}
\right),
\label{eq:strain_tensor}
\\
\dot{\varepsilon}_{\mathrm{eff}}
=
\left(
\frac{2}{3}D_{ij}D_{ij}
\right)^{1/2},
\label{eq:effective_strain_rate}
\\
Z
=
\dot{\varepsilon}_{\mathrm{eff}}
\exp\!\left(
\frac{E_a}{R_gT}
\right),
\label{eq:zener_hollomon}
\\
\sigma_{\mathrm{eff}}
=
\frac{1}{\alpha_\sigma}
\sinh^{-1}\!\left[
\left(
\frac{Z}{A}
\right)^{1/n}
\right],
\label{eq:effective_stress}
\\
\mu_{\mathrm{vis}}
=
\frac{\sigma_{\mathrm{eff}}}
{3\dot{\varepsilon}_{\mathrm{eff}}},
\label{eq:effective_viscosity}
\\
q_g
=
\chi\sigma_{\mathrm{eff}}
\dot{\varepsilon}_{\mathrm{eff}}.
\label{eq:volumetric_heat}
\end{gather}
\end{subequations}
Equations~\eqref{eq:strain_tensor} and~\eqref{eq:effective_strain_rate} define the strain-rate tensor $D_{ij}$ and effective strain rate $\dot{\varepsilon}_{\mathrm{eff}}$, respectively. Equation~\eqref{eq:zener_hollomon} defines the Zener--Hollomon parameter $Z$, where $E_a$ is the activation energy and $R_g$ is the universal gas constant. Equation~\eqref{eq:effective_stress} gives the effective stress $\sigma_{\mathrm{eff}}$, where $\alpha_\sigma$, $A$, and $n$ are Sellars--Tegart material parameters. Equations~\eqref{eq:effective_viscosity} and~\eqref{eq:volumetric_heat} define the effective viscosity and volumetric heat generation, respectively, where $\chi$ denotes the fraction of plastic work converted into heat. These relations provide the thermo-mechanical coupling, since temperature and material-flow velocity jointly determine the effective stress, viscosity, and plastic heat generation.

\subsubsection{Boundary Conditions}
\label{subsubsec:boundary_conditions}
The top surface is divided into distinct boundary regions associated with the feed rod, print head shoulder, and exposed material, as illustrated in Figure~\ref{fig:top_boundary}. BC~2 corresponds to the central feed-rod region, while BC~1 corresponds to the surrounding print head shoulder region.

\begin{figure}[!t]
\centering
\includegraphics[width=0.5\columnwidth]{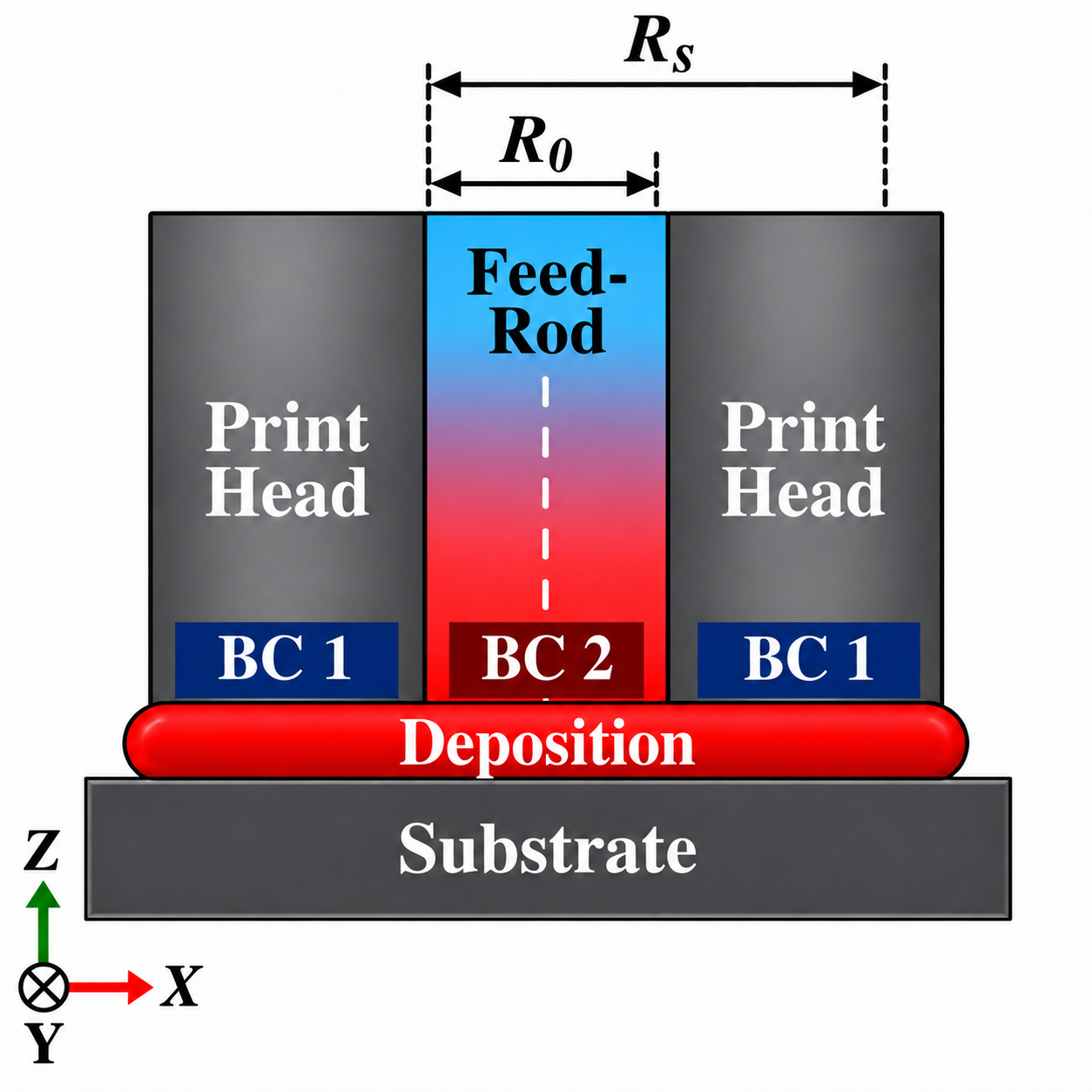}
\caption{Top-surface boundary regions beneath the AFSD tool.}
\label{fig:top_boundary}
\end{figure}

The radial distance from the tool center $(x_c,y_c)$ is defined as
$r=[(x-x_c)^2+(y-y_c)^2]^{1/2}$, with
$x-x_c=r\cos\theta$ and $y-y_c=r\sin\theta$, where $\theta$ is the
circumferential angle measured from the positive $x$-direction. Let $R_0$
and $R_s$ denote the feed-rod and shoulder radii, respectively. The
feed-rod region is assumed to be fully sticking, whereas partial slip is
prescribed over the shoulder through
$\delta=\beta_s(r-R_0)$, where $\beta_s$ is the slip coefficient.

The tool-contact velocity boundary conditions are imposed as
\begin{equation}
\label{eq:velocity_bc}
\begin{aligned}
u_1&=
\begin{cases}
\omega r\sin\theta-V,
& r\le R_0 \quad \text{(BC~2)},\\
(1-\delta)\omega r\sin\theta-V,
& R_0<r\le R_s \quad \text{(BC~1)},
\end{cases}
\\[3pt]
u_2&=
\begin{cases}
-\omega r\cos\theta,
& r\le R_0 \quad \text{(BC~2)},\\
-(1-\delta)\omega r\cos\theta,
& R_0<r\le R_s \quad \text{(BC~1)},
\end{cases}
\\[3pt]
u_3&=
\begin{cases}
-F,
& r\le R_0 \quad \text{(BC~2)},\\
0,
& R_0<r\le R_s \quad \text{(BC~1)}.
\end{cases}
\end{aligned}
\end{equation}
Here, $\omega$ is the tool angular velocity, $V$ is the traverse velocity,
and $F$ is the feed rate. Equation~\eqref{eq:velocity_bc} imposes full
sticking over the feed-rod region and partial slip over the shoulder while
accounting for tool rotation, traverse motion, and feed motion. Outside the
tool-contact region, $(u_1,u_2,u_3)=(-V,0,0)$ is imposed.

The thermal boundary conditions on the top surface are
\begin{equation}
\label{eq:top_thermal_bc}
-k\nabla T\!\cdot\!\mathbf n=
\begin{cases}
-q_2,&r\le R_0 \quad \text{(BC~2)},\\
-q_1,&R_0<r\le R_s \quad \text{(BC~1)},\\
h(T-T_a),&r>R_s,
\end{cases}
\end{equation}
where $\mathbf n$ is the outward unit normal, $h$ is the convective heat-transfer coefficient, and $T_a$ is the ambient temperature. The heat fluxes $q_1$ and $q_2$ over the shoulder and feed-rod regions, respectively, are
\begin{align}
q_1&=\eta\left[
\chi(1-\delta)\frac{\sigma_{\mathrm{eff}}}{\sqrt{3}}
+\delta\mu_f\sigma_{\mathrm{eff}}
\right]\omega r,
\label{eq:shoulder_heat_flux}\\
q_2&=\lambda\chi\frac{\sigma_{\mathrm{eff}}}{\sqrt{3}}\omega r,
\label{eq:feedrod_heat_flux}
\end{align}
where $\mu_f$ is the friction coefficient and $\eta$ and $\lambda$ are heat-partition parameters.

The exposed side surfaces are subjected to the same convective heat-loss condition as the exposed top surface. At the bottom surface, heat loss is modeled as
\begin{equation}
\label{eq:bottom_thermal_bc}
-k\nabla T\cdot\mathbf n
=
C(T-T_a)^3,
\end{equation}
where $C$ is the nonlinear heat-loss coefficient. 

The numerical values of all the constants are provided in Section~\ref{subsec:mcopinn_afsd}.

\subsection{Proposed MCoPINN Formulation}
\label{subsec:mcopinn_formulation}
Figure~\ref{fig:mcopinn_architecture} illustrates the proposed MCoPINN architecture, which extends the CoPINN framework of Gnanasambandam et al.~\cite{11305145} by incorporating a property subnetwork. The framework consists of separate flow, thermal, and property subnetworks. For a spatial coordinate vector $\mathbf x=(x_1,x_2,x_3)$, the flow subnetwork predicts the material-flow velocity components $(u_1,u_2,u_3)$ and pressure $p$, while the thermal subnetwork predicts the temperature $T$. The predicted temperature is then supplied to the property subnetwork, which reconstructs the temperature-dependent thermal conductivity $\widehat{k}(T)$, density $\widehat{\rho}(T)$, and specific heat capacity $\widehat{c}_p(T)$. These mappings are expressed as
\begin{equation}
\label{eq:networks}
\begin{aligned}
\relax[\mathbf u,p]&=\mathcal N_f(\mathbf x;\boldsymbol{\theta}_f),\\
T&=\mathcal N_T(\mathbf x;\boldsymbol{\theta}_T),\\
\relax[\widehat{\rho},\widehat{c}_p,\widehat{k}]
&=\mathcal P(T;\boldsymbol{\psi}),
\end{aligned}
\end{equation}
where $\mathcal N_f$ and $\mathcal N_T$ denote the flow and thermal subnetworks, respectively, and $\mathcal P$ denotes the property subnetwork. The trainable parameters of these subnetworks are represented by $\boldsymbol{\theta}_f$, $\boldsymbol{\theta}_T$, and $\boldsymbol{\psi}$.

The reconstructed thermophysical properties, expressed as functions of temperature, are used in the governing equations in place of constant property values. This creates a two-way coupling within the MCoPINN framework. The predicted temperature determines the thermophysical properties, while the reconstructed properties influence the thermal and material-flow solutions through the governing physics.

\begin{figure}[!t]
\centering
\includegraphics[width=\columnwidth]{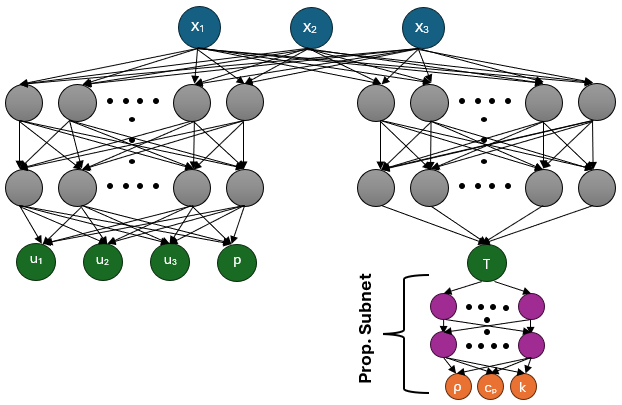}
\caption{Proposed MCoPINN architecture for coupled AFSD field prediction and reconstruction of temperature-dependent thermophysical properties.}
\label{fig:mcopinn_architecture}
\end{figure}

By virtue of this coupling, the flow, thermal, and property subnetworks are trained jointly. The spatial inputs are normalized to $[-1,1]$, and the PDE and boundary-condition residuals are evaluated using nondimensionalized forms of Equations~\eqref{eq:mass_conservation}--\eqref{eq:energy_conservation} and Equations~\eqref{eq:velocity_bc}--\eqref{eq:bottom_thermal_bc}, respectively. Available thermophysical property measurements at discrete temperature values are additionally used to constrain the property subnetwork. The total loss combines PDE, boundary-condition, and property-data contributions and is defined as
\begin{equation}
\label{eq:total_loss}
\begin{aligned}
\mathcal L={}&
w_{\mathrm{PDE}}
\sum_{j\in\mathcal I_{\mathrm{PDE}}}
\frac{1}{N_j^{\mathrm{PDE}}}
\sum_{n=1}^{N_j^{\mathrm{PDE}}}
\left|\mathcal R_j(\mathbf x_n)\right|^2
\\
&+
w_{\mathrm{BC}}
\sum_{j\in\mathcal I_{\mathrm{BC}}}
\frac{1}{N_j^{\mathrm{BC}}}
\sum_{n=1}^{N_j^{\mathrm{BC}}}
\left|\mathcal B_j(\mathbf x_n)\right|^2
\\
&+
w_{\mathrm{prop}}
\sum_{P\in\{\rho,c_p,k\}}
\frac{1}{N_P}
\sum_{m=1}^{N_P}
\left|\widehat P(T_m)-P_m\right|^2 .
\end{aligned}
\end{equation}

Here, $\mathcal I_{\mathrm{PDE}}$ and $\mathcal I_{\mathrm{BC}}$ denote the sets of governing-equation and boundary-condition residuals, respectively. The quantities $\mathcal R_j$ and $\mathcal B_j$ denote the corresponding residuals, while $N_j^{\mathrm{PDE}}$ and $N_j^{\mathrm{BC}}$ are the associated numbers of collocation points. For each thermophysical property $P\in\{\rho,c_p,k\}$, $N_P$ is the number of available measurements, $P_m$ is the measured value at temperature $T_m$, and $\widehat P(T_m)$ is the corresponding prediction. The weights $w_{\mathrm{PDE}}$, $w_{\mathrm{BC}}$, and $w_{\mathrm{prop}}$ control the relative contributions of the three loss components.

\section{Theoretical Analysis of Property Sensitivity and MCoPINN Prediction Error}
\label{sec:theory}
This section presents three main theoretical results to explain how different sources of error affect model predictions. Theorem~\ref{thm:solution_map} shows how errors in reconstructed material properties affect the predicted fields. Theorem~\ref{thm:mcopinn_error} decomposes the total prediction error into contributions from property reconstruction and the PINN field solution. Theorem~\ref{thm:constant_separation} shows that assuming constant material properties can introduce errors even when the governing equations are solved accurately, under the stated assumptions. The controlled 1D study and the AFSD application in Section~\ref{sec:results_discussion} provide numerical evidence consistent with these theoretical results.\\
Let $I=[T_{\min},T_{\max}]$ denote the admissible temperature interval, with $\tau\in I$ denoting temperature. The thermophysical-property vector $\mathbf P$ and coupled state $\mathbf z$ are defined as
\begin{equation}
\label{eq:analysis_state}
\begin{aligned}
\mathbf P(\tau)
&=
\bigl(\rho(\tau),c_p(\tau),k(\tau)\bigr),\\
\mathbf z(\mathbf x)
&=
\bigl(\mathbf u(\mathbf x),p(\mathbf x),T(\mathbf x)\bigr).
\end{aligned}
\end{equation}
Here, $\mathbf P\in\mathcal A$, where $\mathcal A$ denotes the admissible thermophysical-property function space, and $\mathbf z\in\mathcal X$, where $\mathcal X$ denotes the coupled state space. The governing PDEs and boundary conditions are collected in the residual operator
\begin{equation}
\label{eq:residual_operator}
R:\mathcal A\times\mathcal X\rightarrow\mathcal Y,
\qquad
R(\mathbf P,\mathbf z)=0.
\end{equation}
For prescribed properties $\mathbf P$, the corresponding exact solution set is
\begin{equation}
\label{eq:solution_set}
\mathcal Z(\mathbf P)
=
\left\{
\mathbf z\in\mathcal X:
R(\mathbf P,\mathbf z)=0
\right\}.
\end{equation}
Let $\mathbf P^\star$ and $\mathbf z^\star$ denote the reference properties and corresponding solution, such that
\begin{equation}
\label{eq:reference_solution}
R(\mathbf P^\star,\mathbf z^\star)=0.
\end{equation}
For the controlled 1D problem, $\mathbf P^\star$ is known exactly, whereas for AFSD it denotes the prescribed AA6061 property relations used as the numerical reference. The MCoPINN predictions are denoted by $\widehat{\mathbf P}_M$ and $\widehat{\mathbf z}_M$. Table~\ref{tab:analysis_notation} summarizes the notation used in the theoretical analysis.
\begin{table*}[!t]
\caption{Notation Used in the Theoretical Analysis}
\label{tab:analysis_notation}
\centering
\footnotesize
\setlength{\tabcolsep}{5pt}
\renewcommand{\arraystretch}{1.12}
\begin{tabularx}{\textwidth}{
>{\centering\arraybackslash}p{0.09\textwidth}
>{\raggedright\arraybackslash}X
>{\centering\arraybackslash}p{0.09\textwidth}
>{\raggedright\arraybackslash}X}
\toprule
\textbf{Symbol} & \textbf{Meaning}
& \textbf{Symbol} & \textbf{Meaning} \\
\midrule

$I,\tau$
& Admissible temperature interval and temperature argument
&
$\mathcal A$
& Banach space of thermophysical-property functions
\\

$\mathcal X$
& Sobolev product state space
&
$\mathcal Y$
& Banach product space of governing-equation and boundary-condition residuals
\\

$\mathbf P(\tau)$
& Thermophysical-property vector $(\rho,c_p,k)$
&
$\mathbf z(\mathbf x)$
& Coupled state $(\mathbf u,p,T)$
\\

$\mathbf P^\star,\mathbf z^\star$
& Reference properties and corresponding solution
&
$\mathcal Z(\mathbf P)$
& Exact solution set associated with $\mathbf P$
\\

$\widehat{\mathbf P}_M,\widehat{\mathbf z}_M$
& MCoPINN reconstructed properties and predicted state
&
$\mathbf P_C,\mathcal Z_C$
& Constant-property representation and exact solution set
$\mathcal Z_C=\mathcal Z(\mathbf P_C)$
\\

$\mathbf z_C,\widehat{\mathbf z}_C$
& Exact and predicted constant-property states
&
$\eta_C$
& Positive residual-separation constant for the constant-property solution set
\\

$d_C$
& Distance between $\mathbf z^\star$ and $\mathcal Z_C$
&
$\varepsilon_C$
& Constant-property solver error
\\

$\mathcal U_{\mathcal A},\mathcal U_{\mathcal X}$
& Open convex neighborhoods of $\mathbf P^\star$ and $\mathbf z^\star$
&
$\mathcal U_M$
& Sufficiently small neighborhood of
$(\mathbf P^\star,\mathbf z^\star)$ contained in
$\mathcal U_{\mathcal A}\times\mathcal U_{\mathcal X}$
\\

$C^1$
& Class of continuously Fr\'echet differentiable mappings
&
$DR$
& Derivative of the residual operator $R$
\\

$D_{\mathbf P}R,D_{\mathbf z}R$
& Derivatives of $R$ with respect to properties and state
&
$D\mathcal S$
& Derivative of the local solution map
\\

$\mathcal J^\star$
& State derivative $D_{\mathbf z}R(\mathbf P^\star,\mathbf z^\star)$
&
$M_z$
& Bound on $\|(\mathcal J^\star)^{-1}\|$
\\

$\Delta\mathbf P$
& Perturbation of the property functions
&
$\mathcal S(\mathbf P)$
& Local solution map associated with $\mathbf P$
\\

$C_S$
& Local property-to-state sensitivity constant
&
$C_R$
& Local residual-to-state stability constant
\\

$\widehat{\varepsilon}_R$
& Empirical residual evaluated at training points
&
$\varepsilon_{\mathrm{samp}}$
& Sampling discrepancy between discrete and continuous residuals
\\

$\varepsilon_P$
& property reconstruction error
&
&
\\

\bottomrule
\end{tabularx}
\end{table*}

\subsection{Assumptions}
\label{subsec:theory_assumptions}
The analysis considers positive and bounded thermophysical properties over the operating temperature range, consistent with physical behavior and available material data. Let $(\mathbf P^\star,\mathbf z^\star)$ denote a reference solution, where $\mathbf P^\star$ represents the true thermophysical-property functions and $\mathbf z^\star$ the corresponding temperature and material-flow fields. The condition $R(\mathbf P^\star,\mathbf z^\star)=0$ indicates that this solution satisfies the governing equations and boundary conditions. The following assumptions support the theoretical proofs without limiting the general applicability of MCoPINN. They concern locally stable solutions under physically reasonable conditions.

\noindent\textbf{Assumption 1 (Local regularity and stability).}
There exist open convex neighborhoods $\mathcal U_{\mathcal A}$ and $\mathcal U_{\mathcal X}$ of $\mathbf P^\star$ and $\mathbf z^\star$, respectively, such that
\begin{equation}
R\in C^1\left(\mathcal U_{\mathcal A}\times\mathcal U_{\mathcal X};\mathcal Y\right),
\end{equation}
with locally Lipschitz derivative $DR$. Furthermore, the state derivative $\mathcal J^\star=D_{\mathbf z}R(\mathbf P^\star,\mathbf z^\star)$ has a bounded inverse:
\begin{equation}
\left\|(\mathcal J^\star)^{-1}\right\|\le M_z<\infty.
\end{equation}
The regularity condition is reasonable for smooth constitutive relations within the operating regime. Bounded invertibility corresponds to a locally well-posed and stable steady-state solution. This is a reasonable condition for well-posed thermo-mechanical models, including those used in AFSD and other metal additive manufacturing processes, within stable operating regimes.

\noindent\textbf{Assumption 2 (Locality of the MCoPINN prediction).}
Let $\mathcal U_M$ denote a sufficiently small neighborhood of $(\mathbf P^\star,\mathbf z^\star)$ in $\mathcal A\times\mathcal X$, with $\mathcal U_M\subset\mathcal U_{\mathcal A}\times\mathcal U_{\mathcal X}$. The MCoPINN prediction satisfies $(\widehat{\mathbf P}_M,\widehat{\mathbf z}_M)\in\mathcal U_M$. This condition is reasonable for sufficiently accurate predictions and is supported by the agreement with the reference solutions observed in the numerical studies. It specifies the local regime in which the prediction error bounds apply, without imposing any additional restriction on the MCoPINN architecture.

\noindent\textbf{Assumption 3 (Constant-property separation).}
For the constant-property representation $\mathbf P_C$, the solution set $\mathcal Z_C=\mathcal Z(\mathbf P_C)$ is nonempty and satisfies
\begin{equation}
\label{eq:assumption_a3}
\inf_{\mathbf z_C\in\mathcal Z_C}\left\|R(\mathbf P^\star,\mathbf z_C)\right\|_{\mathcal Y}\ge\eta_C>0.
\end{equation}
This condition is reasonable when temperature-dependent properties produce a coupled response that cannot be reproduced exactly by constant properties. The numerical comparisons demonstrate such differences, although strict separation may not hold universally. The condition is required only for establishing a positive constant-property model discrepancy and does not affect the main MCoPINN error bound.

\subsection{Property Sensitivity and Prediction Error}
\label{subsec:property_sensitivity}
\begin{theorem}[Local Property Sensitivity]
\label{thm:solution_map}
Under Assumption 1, there exists a locally unique continuously differentiable solution map
\begin{equation}
\label{eq:solution_map}
R\bigl(\mathbf P,\mathcal S(\mathbf P)\bigr)=0,
\qquad
\mathcal S(\mathbf P^\star)=\mathbf z^\star.
\end{equation}
For an admissible perturbation $\Delta\mathbf P$,
\begin{equation}
\label{eq:sensitivity}
D\mathcal S(\mathbf P^\star)[\Delta\mathbf P]
=
-(\mathcal J^\star)^{-1}
D_{\mathbf P}R(\mathbf P^\star,\mathbf z^\star)
[\Delta\mathbf P].
\end{equation}
If Assumption 2 also holds, then there exists a constant $C_S>0$ such that
\begin{equation}
\label{eq:property_bound}
\left\|
\mathcal S(\widehat{\mathbf P}_M)-\mathbf z^\star
\right\|_{\mathcal X}
\le
C_S\varepsilon_P,
\end{equation}
where
\begin{equation}
\label{eq:property_error}
\varepsilon_P
=
\left\|
\widehat{\mathbf P}_M-\mathbf P^\star
\right\|_{\mathcal A}.
\end{equation}
\end{theorem}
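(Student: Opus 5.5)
The plan is to apply the implicit function theorem in Banach spaces to $R$ at $(\mathbf P^\star,\mathbf z^\star)$, and then to get the Lipschitz bound from the mean value inequality.

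\textbf{Step 1 (existence of the solution map).} Assumption 1 gives $R\in C^1(\mathcal U_{\mathcal A}\times\mathcal U_{\mathcal X};\mathcal Y)$ and $R(\mathbf P^\star,\mathbf z^\star)=0$ by \eqref{eq:reference_solution}. It also states that $\mathcal J^\star=D_{\mathbf z}R(\mathbf P^\star,\mathbf z^\star)$ has a bounded inverse, so $\mathcal J^\star$ is a linear isomorphism $\mathcal X\to\mathcal Y$. The implicit function theorem then yields neighborhoods $V\subset\mathcal U_{\mathcal A}$ of $\mathbf P^\star$ and $W\subset\mathcal U_{\mathcal X}$ of $\mathbf z^\star$, and a unique $C^1$ map $\mathcal S:V\to W$, such that for $(\mathbf P,\mathbf z)\in V\times W$ we have $R(\mathbf P,\mathbf z)=0$ if and only if $\mathbf z=\mathcal S(\mathbf P)$. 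This gives \eqref{eq:solution_map} and local uniqueness.

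\textbf{Step 2 (sensitivity formula).} Differentiate the identity $R(\mathbf P,\mathcal S(\mathbf P))=0$ by the chain rule. This gives $D_{\mathbf P}R+D_{\mathbf z}R\,D\mathcal S=0$ along the graph. At $\mathbf P^\star$, applying $(\mathcal J^\star)^{-1}$ yields \eqref{eq:sensitivity}.

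\textbf{Step 3 (the bound \eqref{eq:property_bound}).} First shrink $V$ to a ball $B(\mathbf P^\star,r)$, which is convex. Since $D\mathcal S$ is continuous and $D\mathcal S(\mathbf P^\star)$ is bounded, we may choose $r$ so that $\sup_{\mathbf P\in B}\|D\mathcal S(\mathbf P)\|\le C_S$. An explicit choice is
\[
C_S=2M_z\sup_{B}\|D_{\mathbf P}R\|.
\]
This works because the perturbation lemma for inverses gives $\|D_{\mathbf z}R(\mathbf P,\mathcal S(\mathbf P))^{-1}\|\le 2M_z$ once $\|D_{\mathbf z}R-\mathcal J^\star\|\le 1/(2M_z)$. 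That closeness holds on a small enough ball by the local Lipschitz continuity of $DR$ in Assumption 1.

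Next use Assumption 2. Take $\mathcal U_M$ small enough that its $\mathbf P$-projection lies in $B$, so $\widehat{\mathbf P}_M\in B$. The mean value inequality on the segment from $\mathbf P^\star$ to $\widehat{\mathbf P}_M$ gives
\[
\|\mathcal S(\widehat{\mathbf P}_M)-\mathbf z^\star\|_{\mathcal X}\le C_S\|\widehat{\mathbf P}_M-\mathbf P^\star\|_{\mathcal A}=C_S\varepsilon_P .
\]

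\textbf{Main obstacle.} The delicate point is making the locality quantitative. Assumption 2 says only ``sufficiently small'', so the neighborhood $\mathcal U_M$ must be fixed after $r$, the radius from Step 3. This ordering guarantees that $\mathcal S(\widehat{\mathbf P}_M)$ is well defined and that the segment stays inside the domain of $\mathcal S$. The uniform bound on $\|D\mathcal S\|$ also needs care: continuity gives it only locally, and the Lipschitz-derivative hypothesis is what makes the constant explicit.
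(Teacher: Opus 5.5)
Your proposal is correct and follows the paper's proof. Both arguments apply the Banach-space implicit function theorem to obtain $\mathcal S$, differentiate $R(\mathbf P,\mathcal S(\mathbf P))=0$ by the chain rule to get the sensitivity formula, and combine a uniform bound on $\|D\mathcal S\|$ near $\mathbf P^\star$ with the mean value inequality to get the Lipschitz estimate. Your explicit constant $C_S=2M_z\sup_B\|D_{\mathbf P}R\|$ (from the Neumann-series bound on perturbed inverses), and your ordering of neighborhoods (fix the convex ball first, then shrink $\mathcal U_M$), only make quantitative what the paper asserts from continuity of $D\mathcal S$.
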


\begin{IEEEproof}
Under Assumption 1, the Banach-space implicit function theorem \cite[Theorem~15.1]{deimling1985nonlinear} applied to Equation~\eqref{eq:residual_operator} gives the local map $\mathcal S$. Differentiating
$R(\mathbf P,\mathcal S(\mathbf P))=0$
in the direction $\Delta\mathbf P$ gives
\begin{equation}
\label{eq:sensitivity_intermediate}
D_{\mathbf P}R[\Delta\mathbf P]
+
D_{\mathbf z}R
\left[
D\mathcal S[\Delta\mathbf P]
\right]
=0.
\end{equation}
Evaluating at $(\mathbf P^\star,\mathbf z^\star)$ and using the bounded invertibility condition in Assumption 1 yields Equation~\eqref{eq:sensitivity}.

Continuity of $D\mathcal S$ implies that $\|D\mathcal S\|\le C_S$ in a sufficiently small neighborhood. Applying the mean-value estimate between $\mathbf P^\star$ and $\widehat{\mathbf P}_M$ therefore gives
\begin{equation}
\left\|
\mathcal S(\widehat{\mathbf P}_M)-\mathcal S(\mathbf P^\star)
\right\|_{\mathcal X}
\le
C_S
\left\|
\widehat{\mathbf P}_M-\mathbf P^\star
\right\|_{\mathcal A},
\end{equation}
which yields Equation~\eqref{eq:property_bound}.
\end{IEEEproof}

Thus, property reconstruction errors propagate through the coupled solution and can affect both the thermal and material-flow fields. Lemma~\ref{lem:residual_state} complements this result by relating the governing-equation residual to the corresponding state error.

\begin{lemma}[Local Residual-to-State Stability]
\label{lem:residual_state}
Under Assumption 1, there exists $C_R>0$ such that,
\begin{equation}
\label{eq:residual_state}
\left\|
\mathbf z-\mathcal S(\mathbf P)
\right\|_{\mathcal X}
\le
C_R
\left\|
R(\mathbf P,\mathbf z)
\right\|_{\mathcal Y}.
\end{equation}
\end{lemma}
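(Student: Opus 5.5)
The plan is to prove Lemma~\ref{lem:residual_state} locally, for $(\mathbf P,\mathbf z)$ in a sufficiently small neighborhood of $(\mathbf P^\star,\mathbf z^\star)$. We may take this neighborhood to be $\mathcal U_M$ from Assumption~2, possibly after shrinking it, and we take it inside the domain of the map $\mathcal S$ from Theorem~\ref{thm:solution_map}. Such a restriction is implicit in the statement, since $\mathcal S$ is only defined locally. The idea is to compare $R(\mathbf P,\mathbf z)$ with $R(\mathbf P,\mathcal S(\mathbf P))=0$ along the segment joining $\mathcal S(\mathbf P)$ and $\mathbf z$. This segment lies in $\mathcal U_{\mathcal X}$ by convexity. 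We then invert the linearization using the bounded inverse of $\mathcal J^\star$.

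\textbf{Step 1: uniform invertibility of the linearization.} Because $DR$ is locally Lipschitz, and hence continuous, we can choose a radius $r>0$ with the following property: whenever $\|\mathbf P-\mathbf P^\star\|_{\mathcal A}<r$ and $\|\mathbf w-\mathbf z^\star\|_{\mathcal X}<r$,
\begin{equation*}
\left\|D_{\mathbf z}R(\mathbf P,\mathbf w)-\mathcal J^\star\right\|\le \frac{1}{2M_z}.
\end{equation*}
For such $(\mathbf P,\mathbf w)$, write $D_{\mathbf z}R(\mathbf P,\mathbf w)=\mathcal J^\star\bigl(I+(\mathcal J^\star)^{-1}(D_{\mathbf z}R-\mathcal J^\star)\bigr)$. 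The perturbation term has norm at most $1/2$, so a Neumann series argument gives invertibility. Averaging the derivatives along a segment preserves the same estimate. Hence, for any two points $\mathbf z_1,\mathbf z_2$ in the ball $B_r(\mathbf z^\star)$, the averaged operator
\begin{equation*}
\mathcal K=\int_0^1 D_{\mathbf z}R\bigl(\mathbf P,\mathbf z_2+s(\mathbf z_1-\mathbf z_2)\bigr)\,ds
\end{equation*}
is also within $1/(2M_z)$ of $\mathcal J^\star$. It therefore satisfies $\|\mathcal K^{-1}\|\le 2M_z$.

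\textbf{Step 2: the mean-value identity.} Let $\mathbf z_1=\mathbf z$ and $\mathbf z_2=\mathcal S(\mathbf P)$. By continuity of $\mathcal S$ and $\mathcal S(\mathbf P^\star)=\mathbf z^\star$, we can shrink the $\mathbf P$-radius so that $\mathcal S(\mathbf P)\in B_r(\mathbf z^\star)$. Since $R$ is $C^1$, the fundamental theorem of calculus in Banach spaces gives
\begin{equation*}
R(\mathbf P,\mathbf z)=R(\mathbf P,\mathbf z)-R\bigl(\mathbf P,\mathcal S(\mathbf P)\bigr)=\mathcal K\bigl[\mathbf z-\mathcal S(\mathbf P)\bigr].
\end{equation*}
Applying $\mathcal K^{-1}$ yields $\|\mathbf z-\mathcal S(\mathbf P)\|_{\mathcal X}\le 2M_z\|R(\mathbf P,\mathbf z)\|_{\mathcal Y}$. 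So we can take $C_R=2M_z$.

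\textbf{Main obstacle.} The delicate point is bookkeeping of the neighborhoods, not the estimate itself. We must ensure that the whole segment between $\mathbf z$ and $\mathcal S(\mathbf P)$ stays in the region where Step~1 holds. This is what requires convexity of $\mathcal U_{\mathcal X}$ together with the smallness of $\mathcal U_M$. We also need the local solution map of Theorem~\ref{thm:solution_map} to be defined on the relevant $\mathbf P$. If one prefers to avoid the Neumann-series perturbation, a first-order Taylor expansion around $(\mathbf P^\star,\mathbf z^\star)$ with a Lipschitz remainder also works. That route requires the remainder term to be absorbed into the left-hand side, which again depends on the neighborhood radius being small.
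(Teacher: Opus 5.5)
Your proof is correct and takes essentially the paper's approach: linearize $R(\mathbf P,\cdot)$ about $\mathcal S(\mathbf P)$ and invert the state derivative, which remains uniformly invertible near $(\mathbf P^\star,\mathbf z^\star)$. The paper uses a first-order Taylor expansion and absorbs an $O(\|\mathbf z-\mathcal S(\mathbf P)\|_{\mathcal X}^2)$ remainder (relying on the Lipschitz bound on $DR$ and on the neighborhood being small), whereas your averaged-Jacobian identity $R(\mathbf P,\mathbf z)=\mathcal K[\mathbf z-\mathcal S(\mathbf P)]$ is exact, needs only continuity of $D_{\mathbf z}R$, proves via the Neumann series the uniform invertibility the paper merely asserts, and gives the explicit constant $C_R=2M_z$.
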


\begin{IEEEproof}
Let $\mathbf e=\mathbf z-\mathcal S(\mathbf P)$. Since
$R(\mathbf P,\mathcal S(\mathbf P))=0$, a Taylor expansion gives
\begin{equation}
\label{eq:residual_taylor}
R(\mathbf P,\mathbf z)
=
D_{\mathbf z}R
\bigl(\mathbf P,\mathcal S(\mathbf P)\bigr)[\mathbf e]
+
\mathbf r(\mathbf e),
\end{equation}
with $\|\mathbf r(\mathbf e)\|_{\mathcal Y}=O(\|\mathbf e\|_{\mathcal X}^2)$. By the local regularity and bounded invertibility conditions in Assumption 1, the state derivative remains uniformly invertible on a sufficiently small neighborhood. Therefore, within this neighborhood, the quadratic remainder can be absorbed, yielding Equation~\eqref{eq:residual_state}.
\end{IEEEproof}

PINN training evaluates residuals only at finite collocation and boundary points. Let $\widehat{\varepsilon}_R$ denote the empirical residual and $\varepsilon_{\mathrm{samp}}$ the discrepancy between this discrete measure and the continuous residual norm. Then
\begin{equation}
\label{eq:sampling}
\left\|
R(\widehat{\mathbf P}_M,\widehat{\mathbf z}_M)
\right\|_{\mathcal Y}
\le
\widehat{\varepsilon}_R
+
\varepsilon_{\mathrm{samp}}.
\end{equation}

\begin{theorem}[MCoPINN Error Decomposition]
\label{thm:mcopinn_error}
Under Assumptions 1 and 2, the MCoPINN prediction error is bounded by the combined contributions of the field-solver and property reconstruction errors, as follows.
\begin{equation}
\label{eq:mcopinn_bound}
\left\|
\widehat{\mathbf z}_M-\mathbf z^\star
\right\|_{\mathcal X}
\le
C_R
\left(
\widehat{\varepsilon}_R+\varepsilon_{\mathrm{samp}}
\right)
+
C_S\varepsilon_P.
\end{equation}
\end{theorem}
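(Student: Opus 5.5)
The plan is to insert the intermediate state $\mathcal S(\widehat{\mathbf P}_M)$, the exact local solution associated with the reconstructed properties, and split the error with the triangle inequality:
\begin{equation*}
\left\|\widehat{\mathbf z}_M-\mathbf z^\star\right\|_{\mathcal X}
\le
\left\|\widehat{\mathbf z}_M-\mathcal S(\widehat{\mathbf P}_M)\right\|_{\mathcal X}
+
\left\|\mathcal S(\widehat{\mathbf P}_M)-\mathbf z^\star\right\|_{\mathcal X}.
\end{equation*}
The first term is the field-solver error for fixed reconstructed properties. The second is the propagated property-reconstruction error. Before splitting, I will check that $\mathcal S(\widehat{\mathbf P}_M)$ is well defined. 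Assumption~2 places $\widehat{\mathbf P}_M$ inside the projection of $\mathcal U_M$, and I take $\mathcal U_M$ small enough to lie in the domain of the implicit-function map furnished by Theorem~\ref{thm:solution_map}.

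For the second term I will invoke Theorem~\ref{thm:solution_map} directly. Under Assumptions~1 and~2, Equation~\eqref{eq:property_bound} gives $\|\mathcal S(\widehat{\mathbf P}_M)-\mathbf z^\star\|_{\mathcal X}\le C_S\varepsilon_P$.

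For the first term I will apply Lemma~\ref{lem:residual_state} with $(\mathbf P,\mathbf z)=(\widehat{\mathbf P}_M,\widehat{\mathbf z}_M)$. This yields $\|\widehat{\mathbf z}_M-\mathcal S(\widehat{\mathbf P}_M)\|_{\mathcal X}\le C_R\|R(\widehat{\mathbf P}_M,\widehat{\mathbf z}_M)\|_{\mathcal Y}$. I will then bound the continuous residual by the sampling relation~\eqref{eq:sampling}, giving $C_R(\widehat{\varepsilon}_R+\varepsilon_{\mathrm{samp}})$. Adding the two bounds gives Equation~\eqref{eq:mcopinn_bound}.

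The main obstacle is making sure Lemma~\ref{lem:residual_state} genuinely applies at the pair $(\widehat{\mathbf P}_M,\widehat{\mathbf z}_M)$ with a constant $C_R$ independent of $\widehat{\mathbf P}_M$. The lemma's proof needs two things: $D_{\mathbf z}R(\mathbf P,\mathcal S(\mathbf P))$ must be uniformly invertible, and the Taylor remainder must be absorbable. Both hold only when $\mathbf P$ is near $\mathbf P^\star$ and $\mathbf e=\widehat{\mathbf z}_M-\mathcal S(\widehat{\mathbf P}_M)$ is small.

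I would handle this with a perturbation argument. Lipschitz continuity of $DR$ on $\mathcal U_{\mathcal A}\times\mathcal U_{\mathcal X}$ gives $\|D_{\mathbf z}R(\mathbf P,\mathbf z)-\mathcal J^\star\|\le L\,\|(\mathbf P-\mathbf P^\star,\mathbf z-\mathbf z^\star)\|$. A Neumann-series argument then gives $\|D_{\mathbf z}R^{-1}\|\le 2M_z$ whenever this quantity is at most $1/(2M_z)$. The remainder satisfies $\|\mathbf r(\mathbf e)\|\le \tfrac{L}{2}\|\mathbf e\|^2$ by the convexity of the neighborhoods. Requiring $\|\mathbf e\|\le 1/(4LM_z)$ then permits absorption, so one may take $C_R=4M_z$.

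These smallness conditions are exactly what "sufficiently small" in Assumption~2 supplies. I will shrink $\mathcal U_M$ so that both $\widehat{\mathbf P}_M$ and $\widehat{\mathbf z}_M$ are close to the reference. Then $\mathcal S(\widehat{\mathbf P}_M)$ is close to $\mathbf z^\star$ by continuity of $\mathcal S$, and the triangle inequality makes $\mathbf e$ small. With these choices both constants are uniform over $\mathcal U_M$, and the bound follows.
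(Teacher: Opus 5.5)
Your proposal is correct and follows the paper's proof: you insert $\mathcal S(\widehat{\mathbf P}_M)$, apply the triangle inequality, bound the first term with Lemma~\ref{lem:residual_state} and Equation~\eqref{eq:sampling}, and bound the second with Theorem~\ref{thm:solution_map}. Your Neumann-series and Lipschitz-remainder argument, which makes $C_R$ uniform over $\mathcal U_M$ (e.g., $C_R=4M_z$), is a sound elaboration of what the paper leaves implicit in its appeal to a ``sufficiently small'' neighborhood.
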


\begin{IEEEproof}
Add and subtract the exact state corresponding to the reconstructed properties:
\begin{equation}
\label{eq:error_split}
\widehat{\mathbf z}_M-\mathbf z^\star
=
\left[
\widehat{\mathbf z}_M-\mathcal S(\widehat{\mathbf P}_M)
\right]
+
\left[
\mathcal S(\widehat{\mathbf P}_M)-\mathbf z^\star
\right].
\end{equation}
The triangle inequality gives
\begin{equation}
\label{eq:error_triangle}
\begin{aligned}
\left\|
\widehat{\mathbf z}_M-\mathbf z^\star
\right\|_{\mathcal X}
\le{}&
\left\|
\widehat{\mathbf z}_M-\mathcal S(\widehat{\mathbf P}_M)
\right\|_{\mathcal X}\\
&+
\left\|
\mathcal S(\widehat{\mathbf P}_M)-\mathbf z^\star
\right\|_{\mathcal X}.
\end{aligned}
\end{equation}
Lemma~\ref{lem:residual_state} and Equation~\eqref{eq:sampling} bound the first term by
$C_R(\widehat{\varepsilon}_R+\varepsilon_{\mathrm{samp}})$.
Theorem~\ref{thm:solution_map} bounds the second by $C_S\varepsilon_P$. Their sum gives Equation~\eqref{eq:mcopinn_bound}.
\end{IEEEproof}

The first term in Equation~\eqref{eq:mcopinn_bound} represents the neural field-solver and sampling contribution, while the second represents property reconstruction error. When the reference property functions are prescribed directly, $\varepsilon_P=0$, isolating the field-solver contribution.

\subsection{Constant-Property Model Discrepancy}
\label{subsec:constant_property_discrepancy}

Because $\mathbf P_C$ need not be a small perturbation of $\mathbf P^\star$, the constant-property model is treated separately. Define
\begin{equation}
\label{eq:constant_distance}
d_C
=
\operatorname{dist}_{\mathcal X}
(\mathbf z^\star,\mathcal Z_C),
\end{equation}
and, for a constant-property numerical prediction $\widehat{\mathbf z}_C$,
\begin{equation}
\label{eq:constant_solver_error}
\varepsilon_C
=
\operatorname{dist}_{\mathcal X}
(\widehat{\mathbf z}_C,\mathcal Z_C).
\end{equation}
Here, $d_C$ represents model discrepancy and $\varepsilon_C$ represents solver error.

\begin{theorem}[Positive Constant-Property Model Discrepancy]
\label{thm:constant_separation}
Under Assumptions 1 and 3, the constant-property model has a positive discrepancy from the reference solution, as expressed below.
\begin{equation}
\label{eq:constant_model_gap}
d_C>0.
\end{equation}
\end{theorem}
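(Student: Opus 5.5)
We argue by contradiction, using the residual lower bound in Assumption~3 together with continuity of $R$ in the state variable from Assumption~1. Suppose $d_C=0$. By the definition of $d_C$ in Equation~\eqref{eq:constant_distance}, there is then a sequence $\{\mathbf z_C^{(n)}\}\subset\mathcal Z_C$ with $\|\mathbf z_C^{(n)}-\mathbf z^\star\|_{\mathcal X}\to 0$. Note that we do not need $\mathcal Z_C$ to be closed, and we do not need the infimum to be attained. A minimizing sequence is enough, and Assumption~3 ensures that $\mathcal Z_C$ is nonempty, so such a sequence exists.

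The next step uses Assumption~1. The map $R(\mathbf P^\star,\cdot)$ is $C^1$, hence continuous, on the open neighborhood $\mathcal U_{\mathcal X}$ of $\mathbf z^\star$. For all $n$ large enough, $\mathbf z_C^{(n)}\in\mathcal U_{\mathcal X}$. Continuity, together with $R(\mathbf P^\star,\mathbf z^\star)=0$ from Equation~\eqref{eq:reference_solution}, then gives
\begin{equation*}
\bigl\|R(\mathbf P^\star,\mathbf z_C^{(n)})\bigr\|_{\mathcal Y}\to\bigl\|R(\mathbf P^\star,\mathbf z^\star)\bigr\|_{\mathcal Y}=0 .
\end{equation*}
If a quantitative version is preferred, the locally Lipschitz derivative yields a mean-value bound on the convex set $\mathcal U_{\mathcal X}$:
\begin{equation*}
\|R(\mathbf P^\star,\mathbf z_C^{(n)})\|_{\mathcal Y}\le L\|\mathbf z_C^{(n)}-\mathbf z^\star\|_{\mathcal X},
\end{equation*}
where $L$ bounds $\|D_{\mathbf z}R(\mathbf P^\star,\cdot)\|$ near $\mathbf z^\star$. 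In either form, this contradicts Equation~\eqref{eq:assumption_a3}, which requires $\|R(\mathbf P^\star,\mathbf z_C^{(n)})\|_{\mathcal Y}\ge\eta_C>0$ for every $n$. Hence $d_C>0$.

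The quantitative form also gives an explicit lower bound. Suppose some $\mathbf z_C\in\mathcal Z_C$ lies within the ball of radius $r$ about $\mathbf z^\star$ contained in $\mathcal U_{\mathcal X}$. Then $\eta_C\le L\|\mathbf z_C-\mathbf z^\star\|_{\mathcal X}$. Consequently,
\begin{equation*}
d_C\ge\min\{r,\eta_C/L\}.
\end{equation*}

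The only delicate step is a locality issue. Assumption~1 controls $R$ only on $\mathcal U_{\mathcal X}$, so the continuity argument is valid only for points of $\mathcal Z_C$ close to $\mathbf z^\star$. Points of $\mathcal Z_C$ outside this neighborhood, however, are automatically at distance at least $r$ from $\mathbf z^\star$. The argument therefore splits cleanly into the near and far cases, and both give a positive lower bound.
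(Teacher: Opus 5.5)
Your argument is essentially the paper's proof: assume $d_C=0$, take a sequence in $\mathcal Z_C$ converging to $\mathbf z^\star$, and use continuity of $R(\mathbf P^\star,\cdot)$ on the neighborhood from Assumption~1 to drive the residual to $R(\mathbf P^\star,\mathbf z^\star)=0$, which contradicts the bound $\eta_C>0$ of Assumption~3. Your quantitative lower bound $d_C\ge\min\{r,\eta_C/L\}$ is a correct strengthening that the paper does not state, provided $r$ is chosen small enough that $\|D_{\mathbf z}R(\mathbf P^\star,\cdot)\|\le L$ on the ball; local Lipschitzness of $DR$ gives such a bound only near $\mathbf z^\star$, not on all of $\mathcal U_{\mathcal X}$ as your intermediate sentence suggests.
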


\begin{IEEEproof}
Suppose, by contradiction, that $d_C=0$. Then there exists a sequence
$\{\mathbf z_{C,n}\}\subset\mathcal Z_C$ such that
\begin{equation}
\mathbf z_{C,n}\rightarrow\mathbf z^\star
\qquad\text{in }\mathcal X.
\end{equation}
For sufficiently large $n$, $(\mathbf P^\star,\mathbf z_{C,n})$ lies within the neighborhood specified in Assumption 1. By continuity of $R$,
\begin{equation}
R(\mathbf P^\star,\mathbf z_{C,n})
\rightarrow
R(\mathbf P^\star,\mathbf z^\star)
=0
\qquad\text{in }\mathcal Y.
\end{equation}
Therefore,
\begin{equation}
\inf_{\mathbf z_C\in\mathcal Z_C}
\left\|
R(\mathbf P^\star,\mathbf z_C)
\right\|_{\mathcal Y}
=0,
\end{equation}
which contradicts Assumption 3. Hence, $d_C>0$.
\end{IEEEproof}

\begin{corollary}[Prediction Error Floor]
\label{cor:constant_discrepancy}
Under Assumptions 1 and 3, the constant-property prediction error has a lower bound determined by the model discrepancy and solver error.
\begin{equation}
\label{eq:constant_total_bound}
\left\|
\widehat{\mathbf z}_C-\mathbf z^\star
\right\|_{\mathcal X}
\ge
d_C-\varepsilon_C.
\end{equation}
\end{corollary}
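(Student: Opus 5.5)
The corollary follows from the reverse triangle inequality for the distance function to the set $\mathcal Z_C$ in $\mathcal X$. The plan is to show that $\operatorname{dist}_{\mathcal X}(\cdot,\mathcal Z_C)$ is $1$-Lipschitz and then apply it to the pair $\mathbf z^\star$, $\widehat{\mathbf z}_C$. The quantities are defined in Equations~\eqref{eq:constant_distance} and~\eqref{eq:constant_solver_error}.

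First, fix an arbitrary $\mathbf z_C\in\mathcal Z_C$; this set is nonempty by Assumption 3. By the triangle inequality,
\begin{equation*}
d_C\le\left\|\mathbf z^\star-\mathbf z_C\right\|_{\mathcal X}
\le\left\|\mathbf z^\star-\widehat{\mathbf z}_C\right\|_{\mathcal X}
+\left\|\widehat{\mathbf z}_C-\mathbf z_C\right\|_{\mathcal X}.
\end{equation*}
The first inequality is just the definition of $d_C$ as an infimum over $\mathcal Z_C$. Second, take the infimum of the right-hand side over $\mathbf z_C\in\mathcal Z_C$. The first term does not depend on $\mathbf z_C$, and the infimum of the second term is $\varepsilon_C$. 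This gives $d_C\le\|\widehat{\mathbf z}_C-\mathbf z^\star\|_{\mathcal X}+\varepsilon_C$, and rearranging yields Equation~\eqref{eq:constant_total_bound}.

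Finally, I would add a remark connecting the result to Theorem~\ref{thm:constant_separation}. The inequality itself holds for any nonempty set and does not use Assumption 1. The role of Assumptions 1 and 3 is to guarantee, via Theorem~\ref{thm:constant_separation}, that $d_C>0$. Consequently, whenever the solver error satisfies $\varepsilon_C<d_C$, the bound is a genuinely positive error floor that no amount of solver accuracy can remove.

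There is no real obstacle in this argument. The only point needing care is taking the infimum correctly, so that no attainment of either distance is assumed: $\mathcal Z_C$ need not be closed, and a nearest point need not exist.
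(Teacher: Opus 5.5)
Your proposal is correct and follows the paper's proof. The paper invokes the $1$-Lipschitz property of $\operatorname{dist}_{\mathcal X}(\cdot,\mathcal Z_C)$ to obtain $d_C\le\|\widehat{\mathbf z}_C-\mathbf z^\star\|_{\mathcal X}+\varepsilon_C$, rearranges, and cites Theorem~\ref{thm:constant_separation} for $d_C>0$; you prove that Lipschitz step explicitly via the triangle inequality and an infimum, and you correctly note that the inequality needs only $\mathcal Z_C\neq\emptyset$.
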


\begin{IEEEproof}
Since distance to a nonempty set is 1-Lipschitz,
\begin{equation}
d_C
\le
\left\|
\widehat{\mathbf z}_C-\mathbf z^\star
\right\|_{\mathcal X}
+
\varepsilon_C.
\end{equation}
Rearranging gives Equation~\eqref{eq:constant_total_bound}, while
Theorem~\ref{thm:constant_separation} establishes that $d_C>0$.
\end{IEEEproof}
For an accurately solved constant-property model, $\varepsilon_C$ is small relative to $d_C$, so the lower bound primarily reflects the model discrepancy introduced by the constant-property assumption rather than solver error. Combining Theorem~\ref{thm:mcopinn_error} and Corollary~\ref{cor:constant_discrepancy} gives the sufficient condition
\begin{equation}
\label{eq:comparison_condition}
C_R
\left(
\widehat{\varepsilon}_R+\varepsilon_{\mathrm{samp}}
\right)
+
C_S\varepsilon_P
<
d_C-\varepsilon_C.
\end{equation}
Under this condition,
\begin{equation}
\left\|
\widehat{\mathbf z}_M-\mathbf z^\star
\right\|_{\mathcal X}
<
\left\|
\widehat{\mathbf z}_C-\mathbf z^\star
\right\|_{\mathcal X},
\end{equation}
so the MCoPINN prediction is closer to the reference solution in the $\mathcal X$-norm. This condition provides a sufficient criterion for lower MCoPINN prediction error relative to the constant-property model, but it does not imply universal superiority of the learned-property formulation.

\section{Results and Discussion}
\label{sec:results_discussion}

\subsection{Case Study: 1D Nonlinear Conduction}
\label{subsec:controlled_1d}
A controlled 1D nonlinear-conduction problem is considered to isolate the effect of thermophysical-property reconstruction from the AFSD multiphysics. Figure~\ref{fig:1d_setup} illustrates the problem. A bar of length $L=0.01$~m is maintained at $T(0)=900$~K and $T(L)=300$~K. The thermal conductivity varies with temperature and is treated as the unknown property to be reconstructed.

\begin{figure}[H]
\centering
\includegraphics[width=0.9\columnwidth]{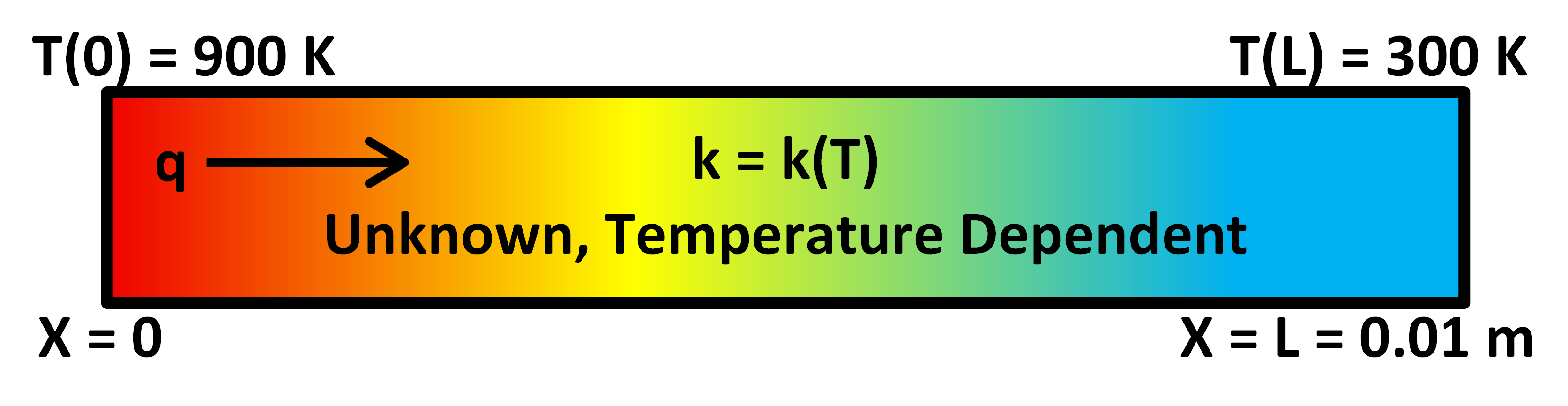}
\caption{1D nonlinear-conduction: problem setup.}
\label{fig:1d_setup}
\end{figure}

The governing heat-conduction equation and the prescribed ground-truth conductivity are
\begin{subequations}
\label{eq:1d_problem}
\begin{gather}
\frac{d}{dx}
\left[
k(T)\frac{dT}{dx}
\right]
=0,
\label{eq:1d_governing}
\\
k^\star(T)
=
11+
\frac{16}
{1+\exp[-(T-625)/30]}.
\label{eq:1d_true_k}
\end{gather}
\end{subequations}
Here, $k^\star(T)$ is expressed in W/(m$\cdot$K). The complete relation in Equation~\eqref{eq:1d_true_k} is hidden from B1--B5. It is used by B0 and to generate the reference FDM solution. This inverse setting follows standard temperature-dependent conductivity problems \cite{KIM2001537,Martin1D2000}.

\begin{figure}[H]
\centering
\includegraphics[width=0.4\columnwidth]{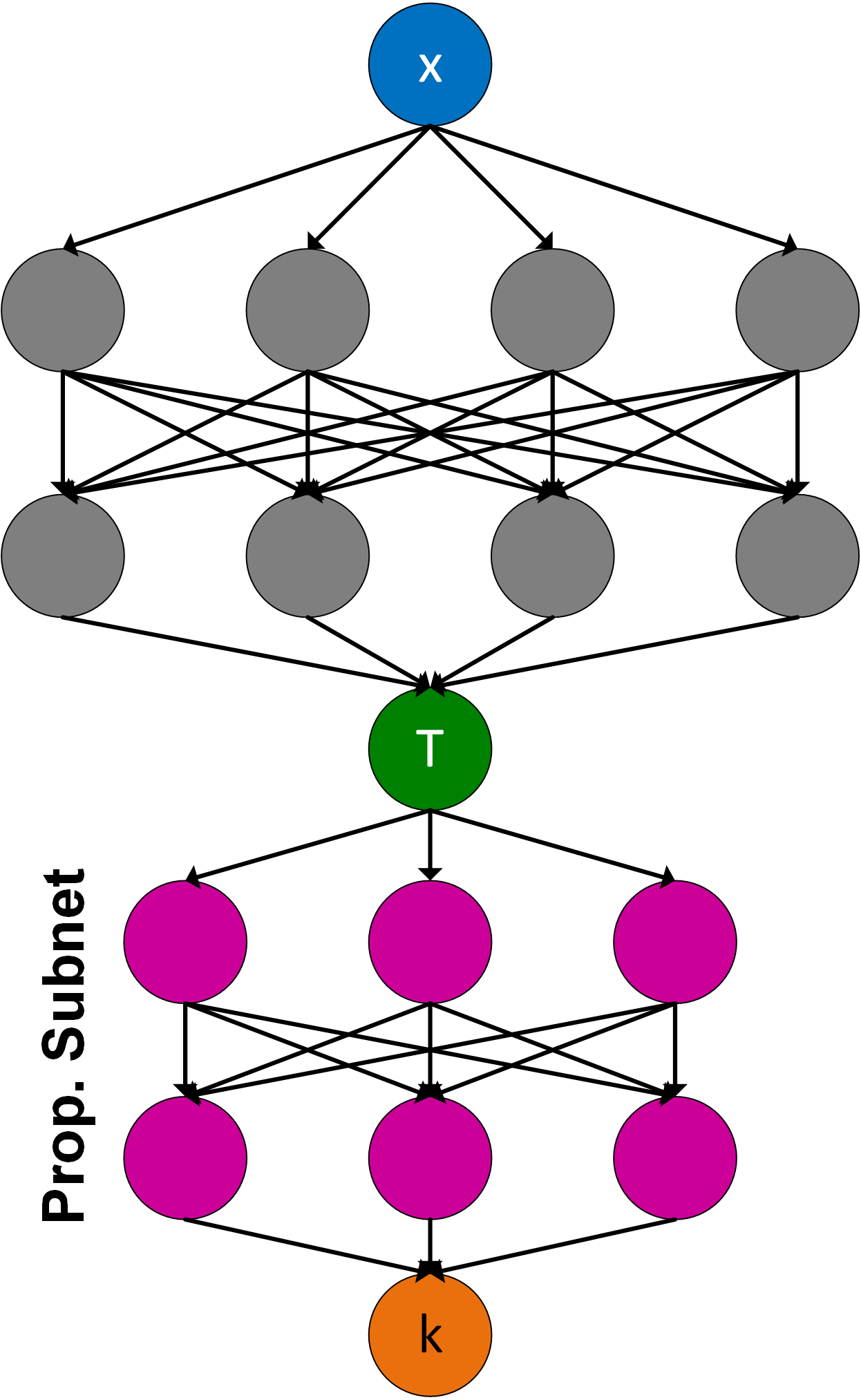}
\caption{Proposed PINN architecture for the 1D nonlinear-conduction study.}
\label{fig:1d_architecture}
\end{figure}

To compare different strategies for reconstructing the incomplete conductivity information, seven models are considered using the same temperature-PINN architecture and training setup. Table~\ref{tab:1d_models} summarizes these models. B0 uses the prescribed ground-truth conductivity function $k^\star(T)$ given in Equation~\eqref{eq:1d_true_k} and serves as the zero-property-error oracle. D0 assumes a constant conductivity of $20$~W/(m$\cdot$K). For B1--B5, only sparse samples of $k^\star(T)$ are provided. B1--B3 reconstruct the conductivity function using only these measurements through data-driven reconstruction methods, without physics coupling. B1 uses piecewise-linear interpolation, B2 uses cubic-spline interpolation, and B3 fits a simple neural network. Once reconstructed, the corresponding conductivity functions are kept fixed during temperature-PINN training. B4 uses a trainable cubic spline that is updated through the governing physics, while B5 uses the proposed property subnetwork.

\begin{table*}[!t]
\caption{Models Used in the Controlled 1D Study}
\label{tab:1d_models}
\centering
\footnotesize
\begin{tabular}{cllccllc}
\toprule
Model & Conductivity representation & Physics-coupled & Data &
Model & Conductivity representation & Physics-coupled & Data\\
\midrule
B0 & Ground-truth $k^\star(T)$ & No & Full law &
B3 & Data-only neural network & No & Sparse\\
D0 & Constant $k=20$ & No & Fixed &
B4 & Trainable cubic spline & Yes & Sparse\\
B1 & Piecewise-linear interpolation & No & Sparse &
B5 (proposed) & Property subnetwork & Yes & Sparse\\
B2 & Cubic-spline interpolation & No & Sparse &
& & &\\
\bottomrule
\end{tabular}
\end{table*}

Figure~\ref{fig:1d_architecture} illustrates the architecture used in the controlled 1D study. All models share the same temperature network, which takes the spatial coordinate $x$ as input and predicts $T$. The temperature-network architecture was selected through hyperparameter tuning using B0, which uses the prescribed ground-truth conductivity. The selected network contains four hidden layers with 48 neurons per layer and achieves a temperature RMSE of 0.0157~K, indicating sufficient capacity for the present 1D problem. The same temperature-network architecture is used across all models to ensure a consistent comparison. B5 additionally uses a smaller property subnetwork with two hidden layers and 32 neurons per layer to reconstruct the scalar conductivity function $k(T)$. The temperature network uses an initial learning rate of $10^{-3}$, while the B5 property subnetwork uses $5\times10^{-4}$. Adam is used for optimization with a maximum of 20,000 epochs. The learning rate is reduced by a factor of 0.5 when the loss evaluated on a fixed set of monitoring points stops improving. Equation~\eqref{eq:1d_governing} is enforced at 512 interior collocation points through the PDE residual loss. The prescribed end temperatures are imposed exactly through a hard boundary mapping, so no separate boundary-condition loss is required.

Three levels of conductivity-data sparsity are considered: high ($S_{\mathrm{h}}$), moderate ($S_{\mathrm{m}}$), and low ($S_{\mathrm{l}}$), corresponding to 3, 4, and 5 conductivity measurements, respectively. Four independent sampling realizations are evaluated at each sparsity level. Figure~\ref{fig:sparsity_models} shows the uniformly spaced realization for each case and the corresponding conductivity reconstructions. The dashed curve represents the hidden $k^\star(T)$, while the markers indicate the conductivity measurements available to B1--B5.

\begin{figure*}[!t]
\centering
\includegraphics[width=0.96\textwidth]{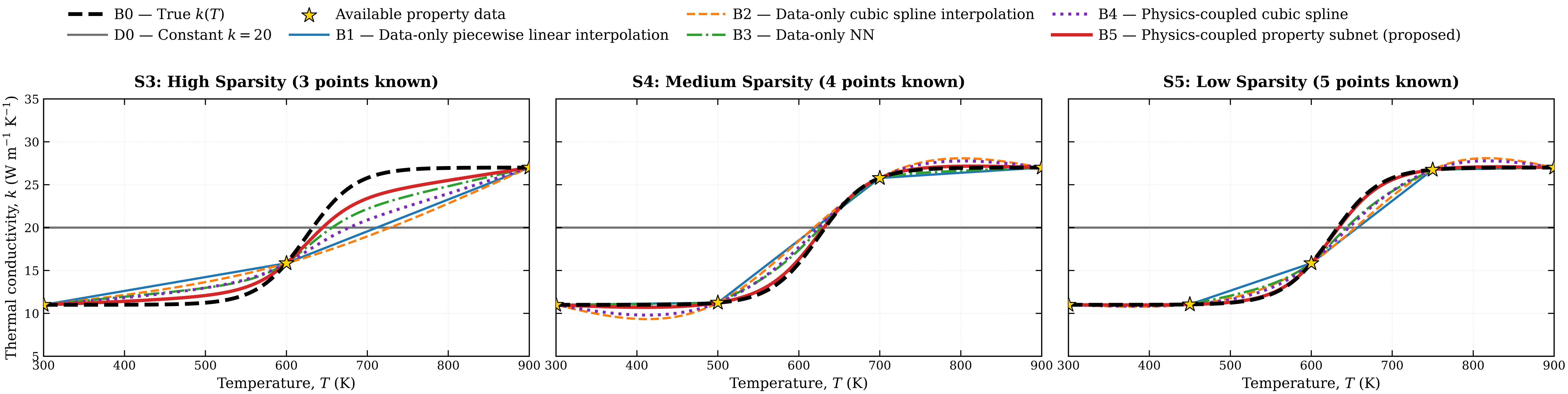}
\caption{Illustration of sparse-data realizations and conductivity reconstructions for $S_{\mathrm{h}}$, $S_{\mathrm{m}}$, and $S_{\mathrm{l}}$.}

\label{fig:sparsity_models}
\end{figure*}

For each reconstructed conductivity, an independent FDM solution is also obtained using the same $k(T)$. This solution is denoted B-FDM, while the corresponding PINN temperature solution is denoted B-PINN. The B-PINN--B-FDM discrepancy therefore measures the field-solver contribution for a fixed conductivity representation. In contrast, the error relative to the reference FDM contains the effect of conductivity reconstruction. This evaluation directly supports the error separation established in Theorem~\ref{thm:mcopinn_error}. The reported metrics are the conductivity root-mean-square error (RMSE) relative to $k^\star(T)$, the B-PINN temperature RMSE relative to the reference FDM, the direct B-PINN--B-FDM temperature RMSE, and the relative heat-flux error
\[
\frac{|q_{\mathrm{pred}}-q_{\mathrm{ref}}|}{|q_{\mathrm{ref}}|},
\]
where $q_{\mathrm{pred}}=-k(T)dT/dx$ is the heat flux predicted by the corresponding model and $q_{\mathrm{ref}}$ is the heat flux obtained from the reference FDM using the ground-truth conductivity $k^\star(T)$. Both values are evaluated as spatial averages over the domain.

B0 gives a temperature RMSE of 0.0157~K and a heat-flux error of $1.745\times10^{-5}$\%. This confirms that the state PINN accurately solves the temperature field when the complete conductivity law is prescribed. In this case, $\varepsilon_P=0$ in Theorem~\ref{thm:mcopinn_error}, leaving only the field-solver and sampling contributions.

D0 provides the complementary validation. It gives a temperature RMSE of 71.79~K and a heat-flux error of 9.09\% relative to the reference solution. However, the D0 PINN differs from its corresponding constant-property FDM solution by only $1.13\times10^{-4}$~K. Therefore, the large prediction error arises primarily from the constant-property representation rather than from the PINN solver. This remains consistent with the model discrepancy established in Theorem~\ref{thm:constant_separation} and the prediction error bound in Corollary~\ref{cor:constant_discrepancy}.

Table~\ref{tab:1d_results} summarizes the results for B1--B5 across the three sparsity levels, reported as the mean and standard deviation over four independent sampling realizations. The conductivity, temperature, and heat-flux errors are evaluated against the ground-truth conductivity and its reference FDM solution. The B-PINN vs. B-FDM RMSE instead isolates the field-solver error by comparing the PINN prediction with an FDM solution obtained using the same reconstructed conductivity. At $S_{\mathrm{h}}$, compared with B4, B5 reduces the conductivity RMSE from 2.14 to 0.96~W/(m$\cdot$K) and the temperature RMSE relative to the reference FDM from 10.53 to 4.16~K. This simultaneous reduction in property and temperature errors is consistent with Equation~\eqref{eq:property_bound} of Theorem~\ref{thm:solution_map}, which bounds the resulting state error in terms of the property reconstruction error. B5 also gives a heat-flux error of only 0.0044\%, compared with 2.32\% for B4 and approximately 6--8\% for the data-only models. Figure~\ref{fig:conduction_summary} provides a visual summary of the three primary reconstruction and prediction metrics across $S_{\mathrm{h}}$--$S_{\mathrm{l}}$.

\begin{table*}[!t]
\caption{Performance Evaluation of PINN Models in the 1D Nonlinear-Conduction Study Using Ground-Truth References and Model-Specific FDM Comparisons (Mean $\pm$ Standard Deviation)}
\label{tab:1d_results}
\centering
\footnotesize
\setlength{\tabcolsep}{3.5pt}
\begin{tabular}{clccccc}
\toprule
Sparsity & Metric & B1 & B2 & B3 & B4 & \textbf{B5 (proposed)}\\
\midrule

$S_{\mathrm{h}}$
& $k$-RMSE [W/(m$\cdot$K)]
& 2.96 $\pm$ 0.69
& 4.10 $\pm$ 1.68
& 2.79 $\pm$ 1.00
& 2.14 $\pm$ 0.58
& \textbf{0.96 $\pm$ 0.66}
\\

& $T$-RMSE vs. Reference FDM [K]
& 11.38 $\pm$ 2.94
& 17.09 $\pm$ 5.46
& 11.47 $\pm$ 3.89
& 10.53 $\pm$ 4.30
& \textbf{4.16 $\pm$ 2.67}
\\

& B-PINN vs. B-FDM $T$-RMSE [K]
& 0.23 $\pm$ 0.32
& 0.02 $\pm$ 0.02
& 0.02 $\pm$ 0.02
& 0.03 $\pm$ 0.02
& 0.04 $\pm$ 0.04
\\

& Heat-flux error [\%]
& 5.98 $\pm$ 6.58
& 8.27 $\pm$ 8.57
& 6.92 $\pm$ 4.31
& 2.32 $\pm$ 2.76
& \textbf{0.0044 $\pm$ 0.0045}
\\

\midrule

$S_{\mathrm{m}}$
& $k$-RMSE [W/(m$\cdot$K)]
& 2.45 $\pm$ 0.65
& 2.00 $\pm$ 1.49
& 2.21 $\pm$ 1.63
& 1.30 $\pm$ 0.72
& \textbf{0.42 $\pm$ 0.46}
\\

& $T$-RMSE vs. Reference FDM [K]
& 15.72 $\pm$ 12.14
& 8.34 $\pm$ 5.65
& 9.11 $\pm$ 6.13
& 5.37 $\pm$ 3.17
& \textbf{1.63 $\pm$ 1.90}
\\

& B-PINN vs. B-FDM $T$-RMSE [K]
& 0.27 $\pm$ 0.06
& 0.23 $\pm$ 0.41
& 0.04 $\pm$ 0.03
& 0.03 $\pm$ 0.02
& 0.03 $\pm$ 0.01
\\

& Heat-flux error [\%]
& 6.10 $\pm$ 2.87
& 5.46 $\pm$ 4.76
& 6.44 $\pm$ 5.51
& 1.86 $\pm$ 2.38
& \textbf{0.0022 $\pm$ 0.0016}
\\

\midrule

$S_{\mathrm{l}}$
& $k$-RMSE [W/(m$\cdot$K)]
& 3.70 $\pm$ 3.65
& 4.14 $\pm$ 5.26
& 2.31 $\pm$ 1.87
& 1.33 $\pm$ 0.73
& \textbf{0.24 $\pm$ 0.10}
\\

& $T$-RMSE vs. Reference FDM [K]
& 17.61 $\pm$ 15.74
& 17.53 $\pm$ 21.12
& 9.72 $\pm$ 8.25
& 7.22 $\pm$ 3.56
& \textbf{1.22 $\pm$ 0.94}
\\

& B-PINN vs. B-FDM $T$-RMSE [K]
& 0.27 $\pm$ 0.20
& 0.03 $\pm$ 0.03
& 0.07 $\pm$ 0.06
& 0.05 $\pm$ 0.06
& 0.06 $\pm$ 0.05
\\

& Heat-flux error [\%]
& 9.77 $\pm$ 10.58
& 11.68 $\pm$ 15.04
& 7.13 $\pm$ 6.55
& 2.16 $\pm$ 2.99
& \textbf{0.0044 $\pm$ 0.0038}
\\

\bottomrule
\end{tabular}
\end{table*}

\begin{figure*}[!t]
\centering
\includegraphics[width=\textwidth]{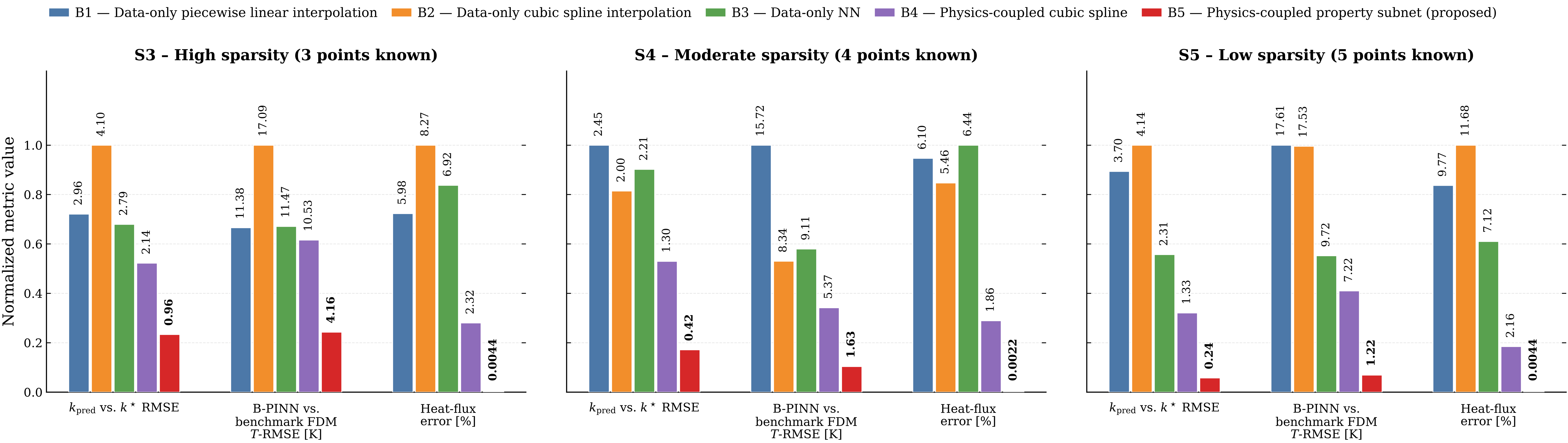}
\caption{Summary of conductivity RMSE, temperature RMSE relative to the reference FDM, and heat-flux error across the three sparsity levels.}
\label{fig:conduction_summary}
\end{figure*}

B5 improves further as additional conductivity information becomes available. At $S_{\mathrm{m}}$ and $S_{\mathrm{l}}$, the conductivity RMSE decreases to 0.42 and 0.24~W/(m$\cdot$K), while the corresponding temperature RMSE decreases to 1.63 and 1.22~K, respectively. The direct B-PINN--B-FDM discrepancies remain small, with values of 0.04, 0.03, and 0.06~K for $S_{\mathrm{h}}$--$S_{\mathrm{l}}$. This shows that, for a fixed reconstructed conductivity, the PINN closely reproduces the corresponding FDM solution and that the remaining temperature error is primarily associated with property reconstruction. This behavior is consistent with the error decomposition in Equation~\eqref{eq:mcopinn_bound} of Theorem~\ref{thm:mcopinn_error}.

B5 also maintains a heat-flux error below 0.005\% at all three sparsity levels, whereas the other reconstruction methods show noticeably larger errors. This provides an additional physics-based assessment because the heat flux depends on both the reconstructed $k(T)$ and the resulting temperature gradient. The B0 result further shows that a complete and trusted conductivity law should be prescribed directly, whereas the proposed property subnetwork is intended for cases with sparse or incomplete property information.

\subsection{MCoPINN for AFSD}
\label{subsec:mcopinn_afsd}

\subsubsection{Process Conditions and FVM Benchmark}
\label{subsubsec:afsd_conditions}

The AFSD simulations consider AA6061 over a $100\times40\times5$~mm$^3$ domain. The tool rotates at 300~rpm, and four traverse speeds from 2.12 to 8.48~mm/s are considered. The process conditions, parameters of boundary conditions, constitutive constants, and thermophysical properties are summarized in Table~\ref{tab:afsd_parameters}. The values used in the constant-property models correspond approximately to room-temperature properties and are consistent with those adopted in previous studies \cite{GOTAWALA2023114,11305145}.

\begin{table}[!t]
\caption{AFSD Process Parameters and Constants}
\label{tab:afsd_parameters}
\centering
\small
\setlength{\tabcolsep}{3pt}
\renewcommand{\arraystretch}{1.08}

\begin{tabularx}{\columnwidth}{
>{\centering\arraybackslash}p{0.15\columnwidth}
>{\raggedright\arraybackslash}X
>{\centering\arraybackslash}p{0.15\columnwidth}
>{\raggedright\arraybackslash}X}
\toprule
\textbf{Parameter} & \textbf{Value}
& \textbf{Parameter} & \textbf{Value} \\
\midrule

Domain
& $100\times40\times5$~mm$^3$
&
$(x_c,y_c)$
& $(50.125,19.875)$~mm
\\

$R_0$
& 7.16~mm
&
$R_s$
& 19.05~mm
\\

$\omega$
& 300~rpm
&
$V$
& 2.12, 4.24, 6.36, 8.48~mm/s
\\

$F$
& $0.75V$
&
$T_a$
& 298~K
\\

$\mu_f$
& 0.30
&
$\eta$
& 0.933
\\

$\lambda$
& 0.50
&
$h$
& 20~W/(m$^2\!\cdot$K)
\\

$C$
& 0.15~W/(m$^2\!\cdot$K$^3$)
&
$\beta_s$
& 0.09090~mm$^{-1}$
\\

$\chi$
& 0.90
&
$E_a$
& 205000~J/mol
\\

$R_g$
& 8.314~J/(mol$\cdot$K)
&
$\alpha_\sigma$
& 0.052~MPa$^{-1}$
\\

$n$
& 3.49
&
$\ln A$
& 27.78
\\

$\rho_C$
& 2700~kg/m$^3$
&
$k_C$
& 167~W/(m$\cdot$K)
\\

$c_{p,C}$
& 897~J/(kg$\cdot$K)
&
&
\\

\bottomrule
\end{tabularx}
\end{table}

Temperature-dependent thermophysical-property measurements for AA6061 are taken from \cite{met7070256} and are summarized in Table~\ref{tab:props_data}.

\begin{table}[!t]
\centering
\caption{Temperature-Dependent Properties of AA6061 \cite{met7070256}}
\label{tab:props_data}
\small
\renewcommand{\arraystretch}{1.10}
\setlength{\tabcolsep}{5pt}
\begin{tabular}{cccc}
\toprule
$T$ [K]
& $\rho$ [kg/m$^3$]
& $k$ [W/(m$\cdot$K)]
& $c_p$ [J/(kg$\cdot$K)]
\\
\midrule
273.0 & 2700 & 162 & 917  \\
366.3 & 2685 & 177 & 978  \\
477.4 & 2657 & 192 & 1028 \\
588.6 & 2630 & 207 & 1078 \\
700.7 & 2602 & 223 & 1133 \\
844.1 & 2574 & 253 & 1230 \\
\bottomrule
\end{tabular}
\end{table}

Linear fits to the data in Table~\ref{tab:props_data} are obtained as
\begin{equation}
\label{eq:lin_props}
\begin{aligned}
k_{\mathrm{ref}}(T)
&=119.0000+0.15384T,\\
c_{p,\mathrm{ref}}(T)
&=776.13848+0.52527T,\\
\rho_{\mathrm{ref}}(T)
&=2764.9414-0.22819T.
\end{aligned}
\end{equation}
Here, $k_{\mathrm{ref}}$, $c_{p,\mathrm{ref}}$, and $\rho_{\mathrm{ref}}$ are expressed in W/(m$\cdot$K), J/(kg$\cdot$K), and kg/m$^3$, respectively.

Two FVM formulations are then considered. The constant-property FVM uses $\rho_C$, $k_C$, and $c_{p,C}$ from Table~\ref{tab:afsd_parameters}. The linear-law FVM uses the temperature-dependent relations in Equation~\eqref{eq:lin_props} and serves as the primary numerical benchmark for the thermal and material-flow fields. The constant-property FVM is retained as a baseline for separating the effect of the thermophysical-property representation from numerical-solver effects. Both FVM formulations are executed on an Intel Core i9-14900F workstation.

\subsubsection{PINN Model Implementations}
\label{subsubsec:afsd_pinn_implementation}

Two PINN formulations are considered: the constant-property CoPINN and the proposed MCoPINN. The constant-property CoPINN uses the fixed values in Table~\ref{tab:afsd_parameters}, while MCoPINN reconstructs $k(T)$, $\rho(T)$, and $c_p(T)$ through the property subnetwork described in Section~\ref{subsec:mcopinn_formulation}.

A hyperparameter study was performed using the 300~rpm and 2.12~mm/s condition, after which the selected configuration was fixed for all traverse speeds. The flow and temperature subnetworks each contain four hidden layers with 48 neurons per layer, while the MCoPINN property subnetwork contains three hidden layers with 16 neurons per layer. All networks use tanh activation functions. Training uses 15,000 interior collocation points. The boundary conditions are enforced using 4,000 points on the top surface and 5,000 points distributed over the remaining surfaces. 

Both PINN models are trained using Adam with an initial learning rate of $2\times10^{-5}$ and a maximum budget of $10^5$ epochs. Training is evaluated on a fixed monitoring set every 1000 epochs. Training terminates after six consecutive monitoring evaluations without improvement, while the learning rate is reduced by a factor of 0.5 after three such evaluations. The same state-network configuration and training controls are retained across both PINN formulations.

Table~\ref{tab:afsd_computational_cost} summarizes the computational times. The FVM formulations require approximately 48--52~hours per operating condition, whereas the PINN-based formulations require approximately 8--8.5~hours of training using two NVIDIA GeForce RTX~2080~Ti GPUs. This corresponds to a reduction of approximately 83\% in computational time for each operating condition.

\begin{table}[!t]
\centering
\caption{Computational Time per AFSD Operating Condition}
\label{tab:afsd_computational_cost}
\footnotesize
\renewcommand{\arraystretch}{1.10}
\begin{tabular}{lc}
\toprule
\textbf{Model} & \textbf{Computational time} \\
\midrule
Constant-property FVM    & $\sim48$~hours \\
Linear-law FVM           & $\sim52$~hours \\
Constant-property CoPINN & $\sim8$~hours \\
MCoPINN (proposed)       & $\sim8.5$~hours \\
\bottomrule
\end{tabular}
\end{table}

\subsubsection{Field Verification Against FVM Benchmark and Property Reconstruction}
\label{subsubsec:field_comparison}
Field errors are evaluated within the deposition zone, $-20 \leq x-x_c \leq 20$~mm, over the full $y$ and $z$ extents. This region contains the strongest thermal and deformation gradients and prevents the weakly varying regions outside the deposition zone from dominating the error metrics.

Before comparison with the main benchmark, the constant-property CoPINN is compared with the constant-property FVM as a solver-consistency check. At 2.12~mm/s, the normalized RMSEs between the two models are 3.77\% for temperature and 4.18\% for resultant velocity, and both remain below 5\% across all four traverse speeds. Since both models use the same constant thermophysical properties and governing formulation, their agreement indicates that the PINN field-solver error is small. This comparison is therefore used only to verify the CoPINN solution accuracy and not as the primary physical benchmark.

The linear-law FVM is used as the primary numerical benchmark because it directly prescribes the temperature-dependent relations in Equation~\eqref{eq:lin_props}. Figure~\ref{fig:afsd_temperature_fields} compares the temperature fields predicted by the four formulations at 2.12~mm/s. The constant-property FVM and CoPINN predict a visibly hotter deposition zone than the linear-law FVM and MCoPINN. The constant-property FVM reaches approximately 857~K, compared with approximately 822~K for the linear-law FVM.

\begin{figure*}[!t]
\centering
\includegraphics[width=\textwidth]
{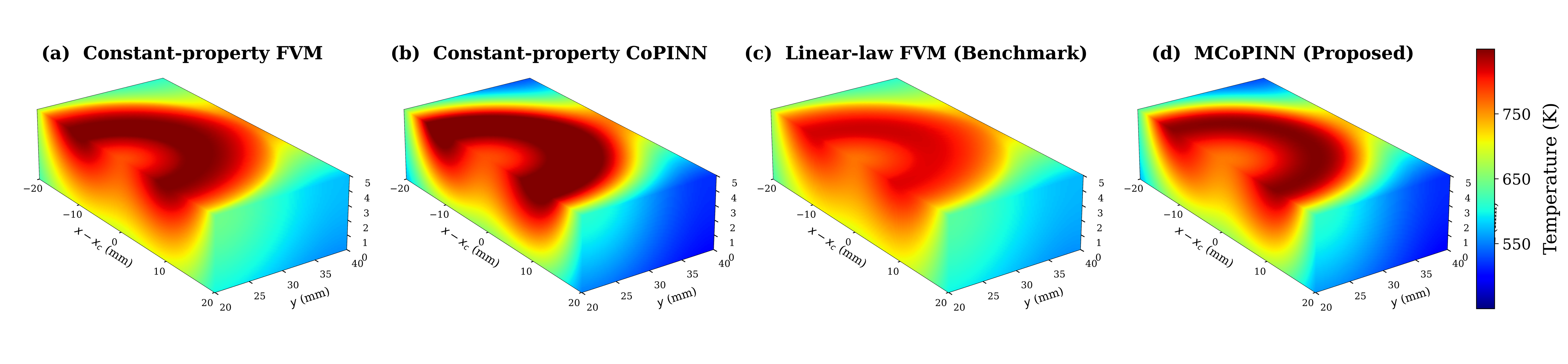}
\caption{Comparison of temperature fields at 2.12~mm/s: (a) constant-property FVM, (b) constant-property CoPINN, (c) linear-law FVM, and (d) MCoPINN (proposed).}
\label{fig:afsd_temperature_fields}
\end{figure*}

The higher temperatures predicted by the constant-property models are primarily associated with the assumption of constant thermophysical properties. This is supported by the constant-property FVM, which exhibits the same upward temperature shift. Over the temperature range reached during AFSD, thermal conductivity and specific heat increase with temperature, while density decreases. The resulting variation in thermal diffusivity,
$\alpha(T)=k(T)/[\rho(T)c_p(T)]$, enhances heat transport away from the deposition zone in the variable-property formulations and reduces the local temperature rise.

Figure~\ref{fig:afsd_velocity_fields} compares the corresponding resultant material-flow velocity fields. The differences among the four formulations are smaller than those observed for temperature because the flow near the tool is strongly constrained by the prescribed rotational, translational, and feed velocities. In addition, $k(T)$ and $c_p(T)$ enter the energy equation directly, whereas their effect on the momentum solution occurs indirectly through the temperature-dependent Sellars--Tegart constitutive response and effective viscosity. Consequently, the material-flow field is less sensitive to the thermophysical-property representation than the temperature field.

\begin{figure*}[!t]
\centering
\includegraphics[width=\textwidth]
{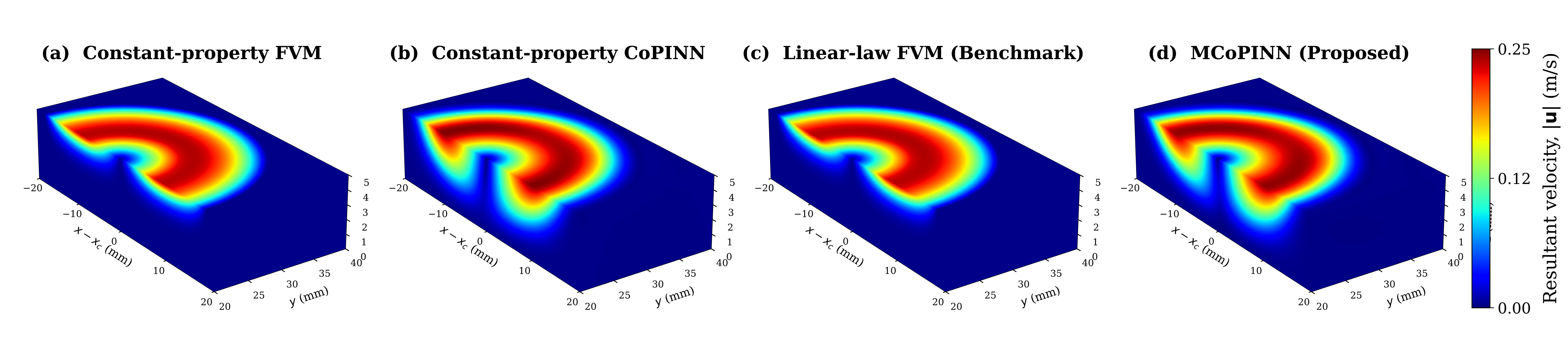}
\caption{Comparison of resultant material-flow velocity fields at 2.12~mm/s: (a) constant-property FVM, (b) constant-property CoPINN, (c) linear-law FVM, and (d) MCoPINN (proposed).}
\label{fig:afsd_velocity_fields}
\end{figure*}

\begin{table}[!t]
\caption{Temperature and Resultant-Velocity RMSEs Relative to the Linear-Law FVM Benchmark}
\label{tab:afsd_rmse}
\centering
\footnotesize
\setlength{\tabcolsep}{2pt}
\renewcommand{\arraystretch}{1.15}

\begin{tabularx}{\columnwidth}{@{}
>{\centering\arraybackslash}p{0.14\columnwidth}
>{\raggedright\arraybackslash}X
>{\centering\arraybackslash}p{0.17\columnwidth}
>{\centering\arraybackslash}p{0.19\columnwidth}
@{}}
\toprule

\parbox[c][2.8\baselineskip][c]{\linewidth}{
\centering\textbf{Speed}\\\textbf{[mm/s]}}
&
\parbox[c][2.8\baselineskip][c]{\linewidth}{
\raggedright\textbf{Model}}
&
\parbox[c][2.8\baselineskip][c]{\linewidth}{
\centering\textbf{$T$ RMSE}\\\textbf{[K]}}
&
\parbox[c][2.8\baselineskip][c]{\linewidth}{
\centering\textbf{$|\mathbf{u}|$ RMSE}\\\textbf{[m/s]}}
\\

\midrule

\multirow{2}{*}{2.12}
& Constant-property CoPINN
& 44.88
& 0.0099 \\
& MCoPINN (proposed)
& \textbf{31.97}
& \textbf{0.0091} \\

\addlinespace[4pt]

\multirow{2}{*}{4.24}
& Constant-property CoPINN
& 43.92
& 0.0084 \\
& MCoPINN (proposed)
& \textbf{24.76}
& \textbf{0.0076} \\

\addlinespace[4pt]

\multirow{2}{*}{6.36}
& Constant-property CoPINN
& 49.00
& 0.0078 \\
& MCoPINN (proposed)
& \textbf{24.91}
& \textbf{0.0064} \\

\addlinespace[4pt]

\multirow{2}{*}{8.48}
& Constant-property CoPINN
& 62.73
& 0.0071 \\
& MCoPINN (proposed)
& \textbf{39.48}
& \textbf{0.0057} \\

\bottomrule
\end{tabularx}
\end{table}

Table~\ref{tab:afsd_rmse} quantifies the field errors relative to the linear-law FVM benchmark. Averaged across the four traverse speeds, the temperature RMSE decreases from 50.13~K for the constant-property CoPINN to 30.28~K for MCoPINN. The velocity RMSE also decreases when temperature-dependent properties are included, although the improvement is smaller than for temperature. These results demonstrate that incorporating temperature-dependent thermophysical properties improves the accuracy of the predicted thermal and material-flow fields.

This interpretation is supported by the reconstructed thermophysical properties shown in Figure~\ref{fig:afsd_property_reconstruction} for 2.12~mm/s. The reconstruction RMSEs are 1.51~W/(m$\cdot$K) for $k$, 3.01~kg/m$^3$ for $\rho$, and 12.50~J/(kg$\cdot$K) for $c_p$. The corresponding normalized RMSEs are 0.61\%, 0.11\%, and 1.02\%, with $R^2$ values of 0.996, 0.994, and 0.979, respectively. Property reconstruction errors remain low across the remaining traverse speeds, indicating consistent recovery of the thermophysical-property relations across the investigated process conditions. These results are consistent with Equation~\eqref{eq:property_bound} of Theorem~\ref{thm:solution_map}, which relates property reconstruction error to the resulting state error.

For the present AA6061 case, the fitted property relations provide a reliable numerical reference through the FVM benchmark. MCoPINN reproduces the corresponding full-field behavior while reconstructing the temperature dependence from discrete property measurements. The controlled 1D study provides a complementary sparse-data setting in which the advantages of physics-coupled property reconstruction are examined against alternative reconstruction methods.

\begin{figure*}[!t]
\centering
\includegraphics[width=\textwidth]
{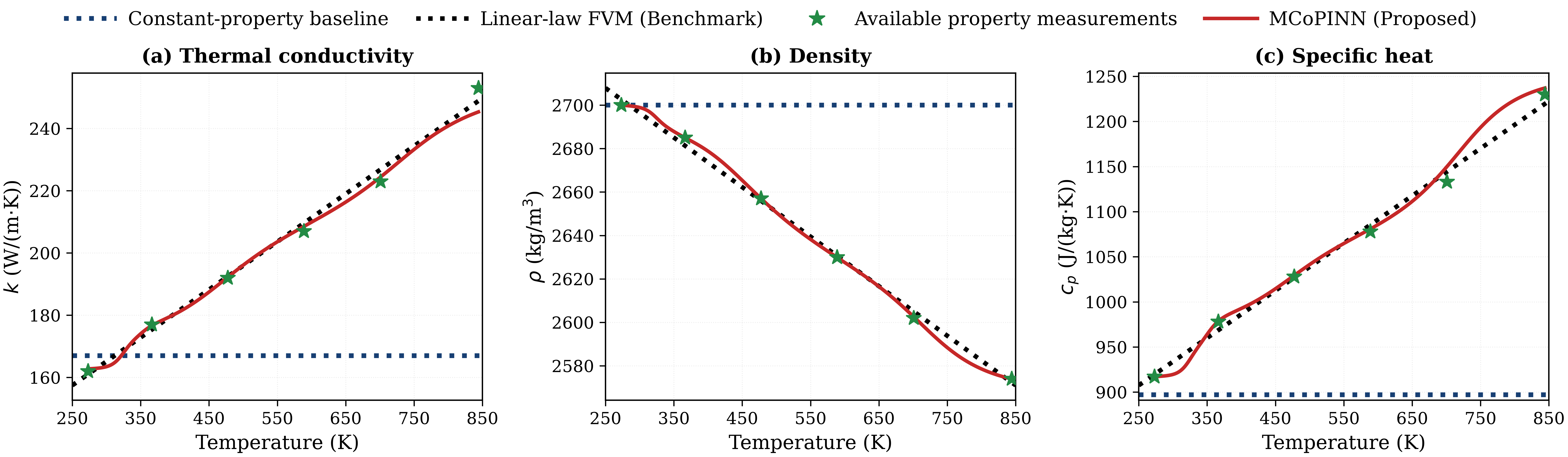}
\caption{MCoPINN reconstruction of temperature-dependent thermal conductivity, density, and specific heat capacity at 2.12~mm/s.}
\label{fig:afsd_property_reconstruction}
\end{figure*}

\subsubsection{Experimental Validation}
\label{subsubsec:thermocouple_validation}

The experimental thermocouple arrangement is shown in Figure~\ref{fig:exp_setup}. The thermocouple is located at $r_{\mathrm{TC}}=16.4$~mm and $L_{\mathrm{TC}}=0.75$~mm above the tool--material interface. Interface temperature and heat flux are sampled at 36 equally spaced circumferential locations. Experimental thermocouple measurements are reserved exclusively for validation.

\begin{figure}[!t]
\centering
\includegraphics[width=0.5\columnwidth]{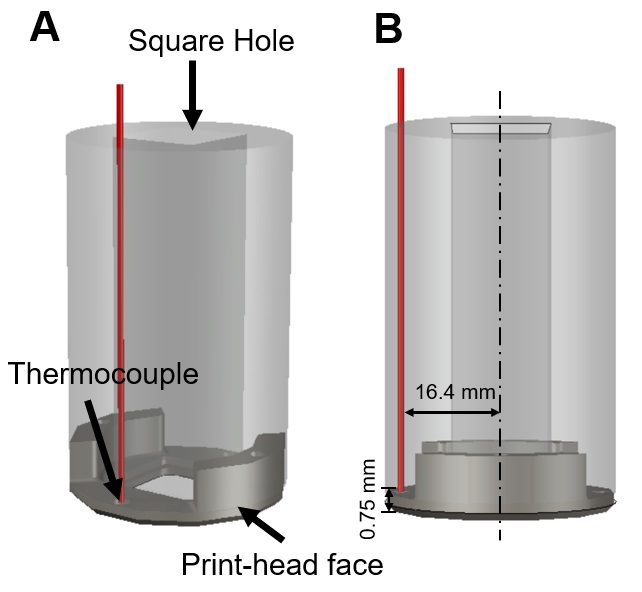}
\caption{Experimental thermocouple arrangement used for AFSD validation.}
\label{fig:exp_setup}
\end{figure}

The predicted thermocouple temperature is obtained from
\begin{equation}
\label{eq:tc_prediction}
T_{\mathrm{TC,pred}}
=
\overline{T}_{\mathrm{int}}
-
\left(
\frac{1-\eta}{\eta}
\right)
\frac{L_{\mathrm{TC}}}{k_{\mathrm{head}}}
\overline{q}_{\mathrm{int}},
\end{equation}
where $\overline{T}_{\mathrm{int}}$ and $\overline{q}_{\mathrm{int}}$ are the circumferentially averaged interface temperature and heat flux over the 36 sampling locations, respectively. The tool-head thermal conductivity is $k_{\mathrm{head}}=17.6$~W/(m$\cdot$K), and $\eta$ is the heat-partition parameter listed in Table~\ref{tab:afsd_parameters}. The same procedure is applied to all four numerical formulations.

Figure~\ref{fig:thermocouple_compare} compares the predicted and measured thermocouple temperatures across four traverse speeds. MCoPINN provides the closest overall agreement with the experimental measurements, while the constant-property models generally overpredict temperature and the linear-law FVM underpredicts it. The temperature trend is quantified using the slope of temperature against traverse speed. The measured slope is $-5.86$~K/(mm/s), while MCoPINN predicts $-5.97$~K/(mm/s), a difference of less than 2\%. The linear-law FVM also captures the decreasing trend, with a slope of $-8.66$~K/(mm/s), despite larger errors in absolute temperature predictions. In contrast, the constant-property FVM and CoPINN predict much smaller slopes of $-2.42$ and $-1.92$~K/(mm/s), respectively. This shows that the constant-property models fail to accurately capture the sensitivity of temperature to changes in traverse speed.

\begin{figure}[!t]
\centering
\includegraphics[width=0.7\columnwidth]
{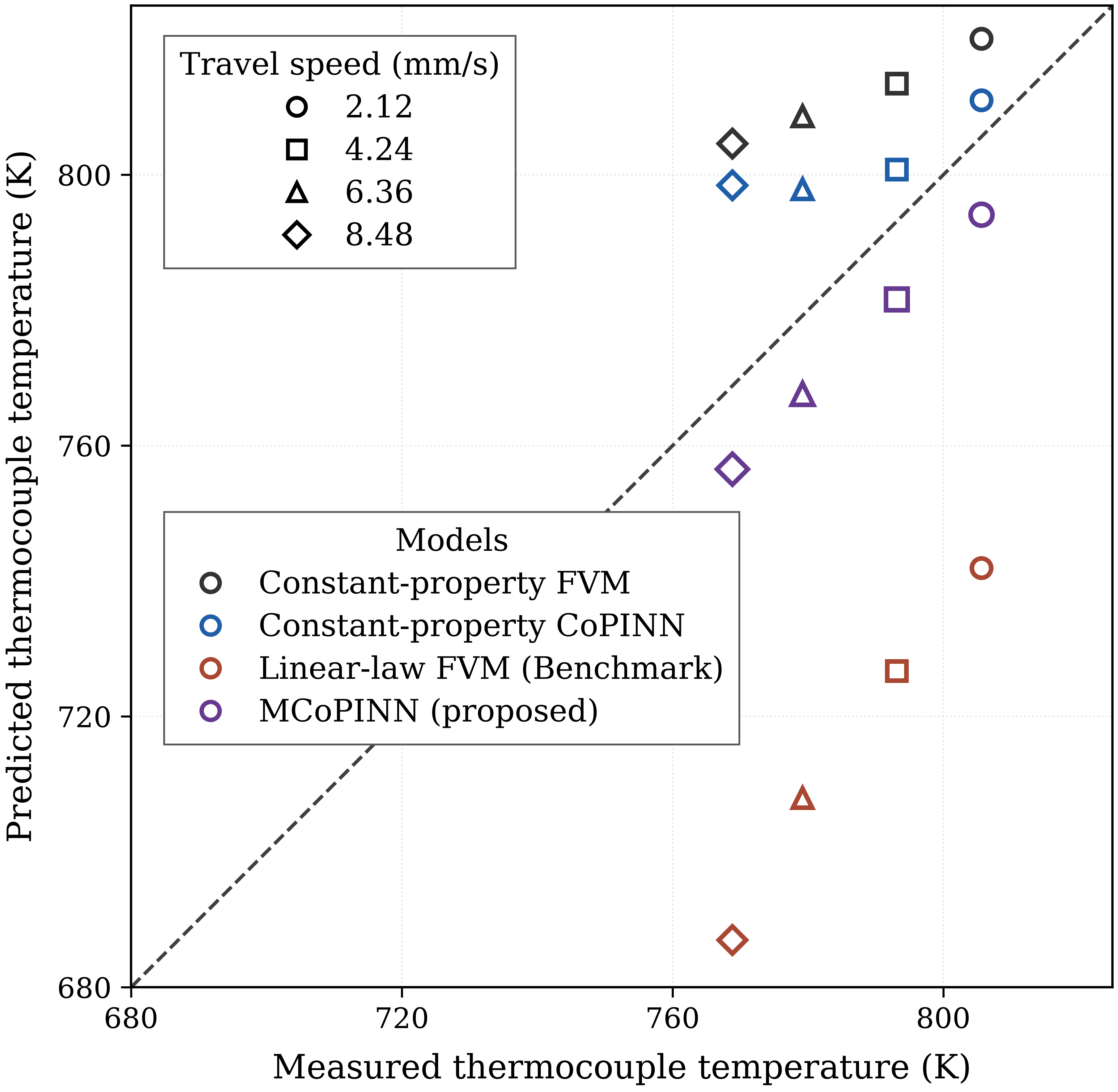}
\caption{Predicted versus measured thermocouple temperatures for the four numerical formulations across four traverse speeds.}
\label{fig:thermocouple_compare}
\end{figure}

The decrease in temperature with increasing traverse speed is consistent with the shorter interaction time between the tool and material, which leads to reduced local heat accumulation. As temperature changes, thermal conductivity, density, and specific heat also vary, affecting heat conduction and energy transport. MCoPINN accounts for these changes through the coupling between the predicted temperature and thermophysical properties. In contrast, the constant-property models cannot capture this additional temperature dependence, which helps explain their weaker response to changes in traverse speed. Their systematic temperature overprediction is particularly undesirable because AFSD is a solid-state process. Overestimating temperature can incorrectly suggest that the material approaches or exceeds its solidus, leading to misleading conclusions about the processing window and thermo-mechanical response. The experimental comparison therefore complements the full-field verification against the FVM benchmark.

\section{Conclusions and Future Work}
\label{sec:conclusion}
This work developed an MCoPINN framework for steady AFSD with temperature-dependent thermophysical properties. The theoretical analysis separated the contributions of property reconstruction error, PINN field-solver error, and constant-property model discrepancy. The controlled 1D study showed that the proposed property subnetwork is particularly effective when material information is sparse, achieving a temperature RMSE of 4.16~K and a heat-flux error below 0.005\% at the sparsest level ($S_{\mathrm{h}}$). In the AFSD application, MCoPINN reduced the average temperature RMSE from 50.13~K for the constant-property CoPINN to 30.28~K while accurately reconstructing the temperature-dependent thermophysical properties. The PINN-based formulation required approximately 8.5~hours per operating condition compared with approximately 52~hours for the benchmark FVM, corresponding to a reduction of approximately 83\% in computational time.

Future work will extend the framework to transient and multilayer cases. The extended framework will also account for uncertainties in process conditions, material properties, and measurements to assess the robustness of the reconstructed properties and predicted fields. Operator-learning approaches may also be incorporated to improve generalization across process parameters without retraining a separate model for each operating condition. Energy-based and variational formulations will be explored as possible alternatives for reducing the computational cost even further. These developments can further improve the applicability of physics-informed constitutive learning to computationally intensive additive-manufacturing processes.

\section*{Acknowledgment}
\label{sec:acknowledgment}
This work was supported in part by the U.S. Army Research Laboratory under Award W911NF-25-2-0067 and in part by the Virginia Innovation Partnership Corporation under Award CCF23-0245-ERR. The authors acknowledge the use of ChatGPT (OpenAI) for language refinement and editorial assistance during manuscript preparation. The authors take full responsibility for the technical content, analyses, and conclusions presented in this work.

\bibliographystyle{IEEEtran}
\bibliography{References2}

\end{document}